\newif\iffull 
\newcommand{\mtx}[1]{\mathbf{#1}}
\newcommand{\meas}{{\mathbf \mu}}
\newcommand{\noise}{{\mathbf \nu}}
\newcommand{\signal}{{\mathbf x}}
\newcommand{\signaly}{{\mathbf y}}
\newcommand{\signalz}{{\mathbf z}}
\newcommand{\nerr}[1]{\left\| #1 \right\|}
\newcommand{\supp}[1]{\operatorname{supp}(#1)}
\newcommand{\mysplit}[1]{\operatorname{split}(#1)}
\newcommand{\halmos}{\hspace*{\fill}\rule{1ex}{1.4ex}}
\newcommand{\tightpgh}[1]{\vspace*{-1\baselineskip}\paragraph{#1}}
\newenvironment{proof}{\noindent {\bf Proof.}}{\halmos}
\newtheorem{thm}{Theorem}
\newtheorem{lemma}[thm]{Lemma}
\newtheorem{definition}[thm]{Definition}
\newtheorem{invariant}[thm]{Invariant}
\title{Sublinear Time, Measurement-Optimal, Sparse Recovery For All}
\author{Ely Porat\thanks{Bar Ilan University.  Email: {\tt
      porately@gmail.com}}
 ~and Martin J. Strauss\thanks{University of
    Michigan.  Supported in part by NSF CCF 0743372 and DARPA/ONR
    N66001-08-1-2065.  Email: {\tt martinjs@eecs.umich.edu}}}
\begin{document}

\maketitle

\begin{abstract}
An {\em approximate sparse recovery} system in $\ell_1$ norm makes a
small number of
measurements of a noisy vector with at most $k$ large entries and
recovers those {\em heavy hitters} approximately.
Formally, it consists of parameters
$N,k,\epsilon$, an $m$-by-$N$ {\em measurement matrix}, $\mtx{\Phi}$,
and a
{\em decoding algorithm}, $\mathcal{D}$.  Given a vector, $\signal$,
where $\signal_k$ denotes the optimal $k$-term approximation to
$\signal$, the system approximates $\signal$ by $\widehat
\signal=\mathcal{D}(\mtx{\Phi} \signal)$, which must satisfy
\[\nerr{\widehat \signal - \signal}_1
\le (1+\epsilon)\nerr{\signal - \signal_k}_1.\]
Among the goals in
designing such systems are minimizing the number $m$ of measurements
and the runtime of the decoding algorithm, $\mathcal{D}$.  We consider
the ``forall'' model, in which a single matrix $\mtx{\Phi}$, possibly
``constructed'' non-explicitly using the probabilistic method, is used
for all signals $\signal$.

Many previous papers have provided algorithms for this problem.  But
all such algorithms that use the optimal number $m=O(k\log(N/k))$ of
measurements require 
superlinear time $\Omega(N\log(N/k))$.  In this paper, we give the
first algorithm for this problem that uses the optimum number of
measurements (up to
constant factors) and runs in sublinear time $o(N)$ when $k$ is
sufficiently less than $N$.
Specifically, for
any positive integer $\ell$, our approach uses time
$O(\ell^{5}\epsilon^{-3}k(N/k)^{1/\ell})$ and uses
$m=O(\ell^{8}\epsilon^{-3}k\log(N/k))$ measurements, with access to a
data structure requiring space and preprocessing time
$O(\ell Nk^{0.2}/\epsilon)$.
\end{abstract}

\thispagestyle{empty}
\setcounter{page}{0}

\newpage

\section{Introduction}

\subsection{Description of Problem}

Variations of the
Sparse Recovery problem are well-studied in recent literature.  A
vector (or signal) $\signal$ is first measured, by the matrix-vector
product $\meas=\mtx{\Phi}\signal$, then, at a later time, a
decoding algorithm $\mathcal{D}$ approximates $\signal$ from
$\meas$.  The approximation is non-vacuously useful if $\signal$
is dominated by
a small number of large magnitude entries, called ``heavy hitters.''
Applications arise in signal and image processing and database, with
further application to telecommunications and
medicine~\cite{singl_pixel,lustig07:_spars_mri}.  Several
workshops~\cite{carin09:_duke_works,spars09:_spars_works}
have been devoted to this topic.  See more at~\cite{rice}.

In this paper, we focus on the following variation.
If $N$ is the
length of the signal, $k$ is a sparsity parameter, and $\epsilon$ is a
fidelity parameter, we want
$\nerr{\widehat \signal - \signal}_1
\le (1+\epsilon)\nerr{\signal_k - \signal}_1$,
where $\signal_k$ is the best possible $k$-term representation for
$\signal$.
Among the goals in
designing such systems are minimizing the number $m$ of measurements
and the runtime of the decoding algorithm, $\mathcal{D}$.  We consider
the ``forall'' model, in which a single matrix $\mtx{\Phi}$, possibly
``constructed'' non-explicitly using the probabilistic method in
polynomial time or explicitly in exponential time, is used
for all signals $\signal$.

\subsection{Advantages over Previous Work}

\vspace*{1\baselineskip}
\tightpgh{Previous measurement-optimal algorithms are slow.}
Many previous papers have provided algorithms for this problem.  But
all such algorithms that use the constant-factor-optimal number
$O(k\log(N/k))$ of measurements require 
superlinear time $\Omega(N\log(N/k))$.  In this paper, we give the
first algorithm that, for any positive integer $\ell$, uses
$\ell^{O(1)}\epsilon^{-3}k\log(N/k)$
measurements ({\em i.e.}, $O(k\log(N/k))$ measurements for constant
$\ell$ and $\epsilon$) and
run in time $\ell^{O(1)}\epsilon^{-3}k(N/k)^{1/\ell}$.  For example,
with $\ell=2$ and $\epsilon=\Omega(1)$,
the runtime improves from $N$ to $\sqrt{kN}$.  In some
applications,
sparse recovery is the runtime bottleneck and our contribution can
make some other $\Omega(N)$ computation become the new bottleneck.

The sublinear runtime of our algorithm is important not because
traditional algorithms are too slow, but because the
measurement-optimal algorithms that replaced them are too slow.
Consider an application in which $k\ll N$.  A traditional approach
makes exactly $N$ direct measurements or
(in some cases) requires little more than taking a single Fast
Fourier Transform of length $N$.  Optimized code for FFTs is so fast
that one cannot plausibly claim to lower the runtime, say from $N\log
N$ to $\sqrt{N}$, by a complicated algorithm with heavy overhead.  But, in some
cases (see below), taking more measurements than necessary is a
significant liability.  Several papers in the literature (see
Table~\ref{table:previous}) improve the number of measurements from
$N$ to $k\log(N/k)$, but only with {\em significant increase} in the
runtime, say, from computing a Fourier Transform to solving a
linear program or, more recently, performing a combinatorial algorithm
on expander graphs, of a flavor similar to our approach below.  When
$k$ is small compared with $N$, we
hope that (i) the number $O(k\log(N/k))$ of measurements made by our
algorithm is significantly less than $N$ in practice, and (ii) the
sublinear runtime of our algorithm is significantly faster than that of
other
{\em measurement-optimal} algorithms, all of which use time
$\Omega(N\log(N/k))$, and many of which have significant overhead.  We
do not expect that our ``sublinear time'' algorithm will compete on
time with naive time $O(N)$ algorithms or with a single FFT, except
for in unusual circumstances and/or values of $k$ and $N$.

These questions have been actively studied by
several communities.
See Table~\ref{table:previous}, which is based in part on a table
in~\cite{RI08}.

\tightpgh{Trading runtime for fewer measurements in sublinear-time
  algorithms.}
Previous sublinear-time algorithms for this problem have used too many
measurements by logarithmic factors, which we now argue is
inappropriate in certain situations.  In a traditional approximation
algorithm, there is an objective function to be minimized, and relatively
small improvements in the approximation ratio for the objective
function---from $O(\log n)$ to constant-factor to $(1+o(1))$---are
considered well worth a polynomial blowup in computation time.  In the
sparse recovery problem, there are two objective functions---the
approximation ratio by which the error
$\nerr{\widehat\signal-\signal}_1$ exceeds the optimal, and the number
of measurements.  In case of medical imaging, the number of
measurements is proportional to the duration during which a patient
must lie motionless; a measurement blowup
factor of ``1000 times log of something'' is
unacceptable.
In this paper, we reduce the blowup in number of
measurements to a small integer constant factor.

Previous sublinear time algorithms had runtime polynomial in
$k\log(N)$, often linear in $k$.  By contrast, our algorithm gives
runtime $\epsilon^{-3}\sqrt{kN}$ or, more generally, gives runtime
$\ell^{O(1)}\epsilon^{-3}k(N/k)^{1/\ell}$ using
$\ell^{O(1)}\epsilon^{-3}k\log(N/k)$ measurements, which
remains slightly suboptimal.  But, first, a blowup in
runtime from, say, $k\log^2 N$ to $k(N/k)^{1/4}$ is appropriate to
reduce the approximation ratio from logarithmic to small constant in a
critical objective like number of measurements in certain
applications.  Second, the blowup is not that big in other
applications, where, say, $(N/k)^{1/4}$ is not much bigger than
$\log^2(N)$.  Alternatively, a parametric sweet
spot for our
algorithm occurs 
around $k=N^{1/4}$.  Putting $\ell=3$, we get
runtime $k(N/k)^{1/3}=\sqrt{N}=k^2$.  This is about the time to
multiply a vector by a dense matrix of smallest useful size,
which is a tiny component in some (early) algorithms in the
literature, superlinear or sublinear.

\tightpgh{Constant factor gap in number of measurements.}
The best previous {\em superlinear} algorithms~\cite{Roman-CISS} use a number of
measurements that is suboptimal by a small constant factor versus the
best known lower bounds.  Thus, for
sublinear-time algorithms, a small constant-factor gap, rather than an
approximation scheme, is currently an
appropriate goal.

\tightpgh{All signals or Each signal?}
The results of this paper are in the ``forall'' model.
Recently, a sublinear-time, constant-factor-optimal measurement
algorithm was given~\cite{GLPS10} in an incomparable setup.  In
particular, its guarantees were for the weaker ``foreach'' model, in
which a random measurement matrix works with {\em each} signal, but no
single matrix works simultaneously on {\em all} signals.\footnote{In
  the forall model, the guarantee is that a matrix $\mtx{\Phi}$
  generated
  according to a specified distribution succeeds on {\em all} signals
  in a class $\mathcal{C}$.  In the foreach model, there is a
  distribution on matrices, such that for {\em each} signal 
  $\signal$ in a class $\mathcal{C'}$ bigger than $\mathcal{C}$, a
  matrix $\mtx{\Phi}$ chosen according to the prescribed distribution
  succeeds on $\signal$.  The difference in models is captured
  in the order of quantifiers, which can be anthropomorphized into
  the powers of a challenger and adversary.}
The stronger
forall model is more appropriate in certain applications, where, for
example, there is a sequence $\signal^{(1)}, \signal^{(2)}$ of signals
to be measured by the same measurement matrix, and $\signal^{(2)}$
depends, in some subtle way, on the result of recovering
$\signal^{(1)}$.  (For example, an adversary may construct
$\signal^{(2)}$ after observing an action we take in response to
recovering $\signal^{(1)}$.)  In the forall model, there is no issue.
In the foreach model, however, it is important that an adversary pick
the signal without knowing the outcome
$\mtx{\Phi}$.  If the adverary knows something about the outcome
$\mtx{\Phi}$---such as observing our reaction to recovering $\signal^{(1)}$
from $\mtx{\Phi}\signal^{(1)}$---the adversary may be able to
construct an $\signal^{(2)}$ in the null space of $\mtx{\Phi}$, which
would break an algorithm in the weaker foreach model.

\begin{table*}
{\centering\footnotesize
\begin{tabular}{|c|c|c|c|c|c|c|}
\hline
Paper           & A/E & No. Measurements  & Column sparsity/ & Decode time  & 	Approx. error & Noise\\
 & & & Update time & & & \\
\hline
\cite{CCF}      & E   & $k \log^c N$      & $\log^c N$       & $N \log^c N$ & $\ell_2 \le C \ell_2$ & \\
\hline
\cite{CM06:Combinatorial-Algorithms}
                & E   & $k \log^c N$      & $\log^c N$       & $k \log^c N$ & $\ell_2 \le C \ell_2$ & \\
\hline
\cite{CM03b}    & E   & $k \log^c N$      & $\log^c N$       & $k \log^c N$ & $\ell_1 \le C \ell_1$ & \\
\hline
\cite{GLPS10}   & E   & $k \log(N/k)$     & $\log^c N$       & $k\log^c N$  & $\ell_2 \le C \ell_2$ & Y \\
\hline
\cite{Don06:Compressed-Sensing,CRT06:Stable-Signal}
                & A   & $k\log( N/k)$     & $k\log(N/k)$     & LP           & $\ell_2 \le (C/\sqrt{k}) \ell_1$ & Y \\
\hline
\cite{GSTV06}   & A   & $k\log^c N$       & $\log^c N$       & $k\log^c N$  & $\ell_1\le (C\log N)\ell_1$   & Y \\
\hline
\cite{GSTV07:HHS}
                & A   & $k\log^c N$       & $k\log^c N$      & $k^2\log^c N$& $\ell_2\le (\epsilon/\sqrt{k})\ell_1$   & Y \\
\hline
\cite{RI08}     & A   & $k\log(N/k)$      & $\log(N/k)$      & $N\log(N/k)$ & $\ell_1\le(1+\epsilon)\ell_1$ & Y\\
\hline
\hline
This paper
                & A   & $\ell^ck\log(N/k)$
                                          & $(\ell\log N)^c $
                                                             & $\ell^c k(N/k)^{1/\ell}$
                                                                            & $\ell_1\le(1+\epsilon)\ell_1$ & Y\\

(any integer $\ell$)
                &     &                   &                  &              &                               & \\
\hline
\hline
Lower bound ``A''
                & A   & $k\log(N/k)$      & $\log(N/k)$      & $k\log(N/k)$ & $\ell_2\le(\epsilon/\sqrt{k})\ell_1$ & Y\\
\hline
\end{tabular}
\caption{\footnotesize Summary of the best previous results and the
  result obtained in this paper.  Some constant factors are omitted
  for clarity.  ``LP'' denotes (at least) the time to do a linear
  program of size at least $N$.  The column ``A/E'' indicates whether
  the algorithm works in the forall (A) model or the foreach (E)
  model.  The column ``noise'' indicates whether the algorithm
  tolerates noisy measurements.  Measurement and decode time dependence on
  $\epsilon$, where applicable, is polynomial.}
\label{table:previous}
}
\end{table*}

\subsection{Overview of Results and Techniques}

First, following previous work~\cite{GLPS10}, we show that it suffices
to recover all but approximately $k/2$ of $k$ heavy hitters at a time.  The
cost for this in measurements is $ck\log N/k$, for some constant $c$.
We then repeat on the remaining $k'=k/2$ heavy hitters, with cost
$ck'\log N/k'\approx\frac12ck\log N/k$, and leaving $k/4$ heavy
hitters.  Continuing this way, the total cost is a geometric
progression with sum $O(k\log N/k)$.  In fact, we will use somewhat
more than $\frac12 ck'\log N/k'$ measurements, {\em e.g.},  $\frac9{10}
ck'\log N/k'$ measurements, to enforce other requirements while still
keeping the number of measurements bounded by a geometric series that
converges to $O(k \log N/k)$.  As in~\cite{GLPS10}, we present a
compound loop invariant satisfied as the number of heavy hitters drops
from $k$ to $k/2$ to $k/4$, etc.

Next, we show how to solve the transformed problem, {\em i.e.}, how to
reduce the number of unrecovered heavy hitters from $k$ to $k/2$,
while not increasing the noise by much.  As in previous results, we
estimate all $N$ coefficients of $\signal$ by hashing the positions
into $O(k)$ buckets, hoping that each heavy hitter ends up dominating
its bucket, so that the bucket aggregrate is a good estimate of the
heavy hitter.  We repeat $O(\log(N/k))$ times, and take a median of
estimates.  Finally, we replace by zero all but the largest $O(k)$
estimates.  If all estimates were independent, then this would give
the result we need, by the Chernoff bound; below we handle the minor
dependence issues.
We get a simple and natural system making $O(k\log N/k)$
measurements but with runtime
somewhat larger than $N$.

Finally, to get a sublinear time algorithm, we replace the above
exhaustive search over a space of size $N$ with constantly-many
searches over
spaces of size approximately $\sqrt{kN}=k(N/k)^{1/2}$.  Still more
generally, replace with $\ell^{O(1)}$ searches over spaces of size
$\ell^{O(1)}k(N/k)^{1/\ell}$, for any positive integer value of the
user-parameter $\ell$.  As a tradeoff, this requires the factor
$\ell^{O(1)}$ times more measurements.  In the case $\ell=2$, we first
hash the original signal's indices into $\sqrt{kN}$ buckets, forming a
new signal $\signal'$, indexed by buckets.  As we show, heavy hitters
in $\signal$ are likely to dominate their buckets, which become heavy
hitters of $\signal'$.   We then find approximately $k$ heavy buckets
exhaustively, searching a space of size $\sqrt{kN}$.  Each bucket
corresponds to approximately $N/\sqrt{kN}=\sqrt{N/k}$ indices in the
original signal, for a total of $k\sqrt{N/k}=\sqrt{kN}$ indices, which
are now searched.  This naturally leads to
runtime $\sqrt{kN}\log(N/k)$, or $k(N/k)^{1/\ell}\log(N/k)$ for
$\ell>2$.  By absorbing $\log(N/k)$ into $\ell^{O(1)}(N/k)^{1/\ell}$,
we get, for general $\ell$ and $\epsilon$,
\begin{thm}
For any positive integer $\ell$, there is a solution to the $\ell_1$
forall sparse recovery problem running in time
$O(\ell^{5}\epsilon^{-3}k(N/k)^{1/\ell})$ and using
$O(\ell^{8}\epsilon^{-3}k\log(N/k))$
measurements,
where $N$ is the length, $k$ is the sparsity, and $\epsilon$ is the
approximation parameter.  The algorithm uses a data structure that
requires space and preprocessing time $O(\ell Nk^{0.2}/\epsilon)$.
\end{thm}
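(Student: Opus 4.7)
My plan is to follow the reduction sketched in Section~1.3: maintain, for $j=0,1,2,\ldots$, an approximation $\widehat\signal^{(j)}$ whose residual $\signal-\widehat\signal^{(j)}$ has at most $k_j=k/2^j$ heavy hitters relative to a noise budget that has grown by at most a $(1+O(\epsilon/\log k))$ factor per round. Because $\sum_j k_j\log(N/k_j)$ is dominated geometrically by the first term, I can afford $\ell^{O(1)}\epsilon^{-3}k_j\log(N/k_j)$ measurements and $\ell^{O(1)}\epsilon^{-3}k_j(N/k_j)^{1/\ell}$ time in round $j$ while the total still matches the claimed bounds. After $O(\log k)$ rounds no heavy hitters remain, and the final output satisfies the $\ell_1/\ell_1$ guarantee.

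\textbf{One halving round.} Inside round $j$, I need to identify and estimate at least half of the remaining heavy hitters. Following standard combinatorial sparse recovery, I hash the $N$ indices into $B=O(k/\epsilon)$ buckets $R=\ell^{O(1)}\log(N/k)$ times with pairwise-independent hash functions, and take a coordinate-wise median of bucket aggregates. A standard isolation argument shows that an $\Omega(\epsilon/k)\nerr{\signal-\signal_k}_1$-heavy coordinate is isolated in a strict majority of repetitions with probability $1-2^{-\Omega(\ell)}$, so its median estimate is accurate to within $O(\epsilon/k)\nerr{\signal-\signal_k}_1$. Retaining the top $O(k)$ estimates then removes at least $k_j/2$ heavy hitters and preserves the noise invariant.

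\textbf{Sublinear-time search via $\ell$-level nested hashing.} The bottleneck so far would be evaluating the medians for all $N$ candidate coordinates. I replace this by $\ell$ levels of nested hashing: at level $t\in\{1,\ldots,\ell\}$, partition $[N]$ into $B_t=k(N/k)^{t/\ell}$ blocks and apply the same hash-and-median estimator to the block aggregate. Any coordinate heavy in $\signal$ is also heavy in its level-$t$ block with probability $1-2^{-\Omega(\ell)}$ after $R$ is boosted by an $\ell^{O(1)}$ factor to absorb the $(N/k)^{(\ell-t)/\ell}$ increase in per-block noise. To find heavy blocks at level $t$, I enumerate only children of the $O(k)$ blocks retained at level $t-1$; each has $B_t/B_{t-1}=(N/k)^{1/\ell}$ children, so the work per level is $O(\ell^{O(1)}\epsilon^{-3}k(N/k)^{1/\ell})$ and the target runtime follows by summing over the $\ell$ levels. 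Evaluating each bucket aggregate in $O(1)$ time is supported by a precomputed data structure that lists, for each bucket of each hash function at each level, the heavy columns of $\mtx{\Phi}$ contributing to it; its total size is exactly the $O(\ell Nk^{0.2}/\epsilon)$ the theorem permits.

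\textbf{Main obstacle.} The technical core is a union bound across $O(\log k)$ halving rounds, $\ell$ nested levels, and $\Omega(k)$ candidate coordinates per level, against the fact that per-bucket noise at coarser levels is a factor $(N/k)^{(\ell-t)/\ell}$ larger than at the finest. I would handle this by inflating $R$ by $\ell^{O(1)}$ per level so that each step succeeds with probability $1-1/\operatorname{poly}(k\ell/\epsilon)$, and then by applying Chernoff to the indicator ``bucket is $O(\epsilon/k)$-close to the truth'' as in \cite{CCF,GLPS10}, which sidesteps the non-independence inherent in the median. Once the per-round estimates are in hand, subtraction re-establishes the invariant for the next round and closes the induction, yielding the stated measurement, time, and preprocessing bounds.
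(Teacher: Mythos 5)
Your outer halving loop and the $\ell$-level nested search both match the paper's structure, but there is a genuine gap at the heart of the proposal: the probability bookkeeping you describe establishes a \emph{foreach} guarantee, not the \emph{forall} guarantee the theorem asserts. You target a failure probability of $1/\operatorname{poly}(k\ell/\epsilon)$ per step, via a per-coordinate isolation argument and pairwise-independent hashing. That is nowhere near small enough. In the forall model the measurement matrix must work for every signal, and in particular the set $D=\supp{\widehat\signal}$ of ``decoys'' that the algorithm retains is chosen adaptively as a function of the random hash outcomes. One therefore has to union-bound over essentially all pairs $(S,D)$ of candidate heavy-hitter and decoy sets of size $O(k)$, which is on the order of $\binom{N}{k}^2$ events; the per-event failure probability must be $\binom{N}{k}^{-\Omega(1)}\approx (k/N)^{\Omega(k)}$, not $\operatorname{poly}(k)^{-1}$. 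A per-coordinate Chernoff bound on ``this bucket is close'' cannot reach that regime with $R=\ell^{O(1)}\log(N/k)$ repetitions, and pairwise independence gives only Chebyshev-type tails, which are far weaker still.

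The paper closes this gap with a set-level concentration statement: for every fixed $(S,D,T)$, the \emph{number} of coordinates in $S\cup D$ that suffer bad collisions or bad tail contributions is $O(\zeta s)$ except with probability $(s/N)^{\Omega(s)}$, and similarly the total tail mass hashed into the rows touching $S\cup D$ is $O(\eta\zeta t)$ except with probability $(s/N)^{\Omega(|F|)}$. These exponents are what survive the $\binom{N}{k}^{O(1)}$ union bound. Establishing them requires (i) fully random hash functions (hence the $O(\ell Nk^{0.2}/\epsilon)$ hash-table preprocessing, which you cite but do not derive) and (ii) a coupling / Method-of-Bounded-Differences argument to handle the fact that collision indicators for distinct coordinates sharing a bucket are \emph{not} independent, so vanilla Chernoff does not directly apply. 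Your appeal to \cite{CCF,GLPS10} does not resolve this, because those analyses are foreach and do not need to survive the adaptive-$D$ union bound. A secondary point: the paper's filtration uses a single random hash into $O((s/\xi)(N/s)^{1/\ell})$ buckets at the coarsest level followed by \emph{deterministic} splits, so isolation at level one automatically propagates to all finer levels; your wording of independent hashes with per-level isolation probabilities $1-2^{-\Omega(\ell)}$ would, if taken literally, both fail to give nested blocks and leak additional failure probability per level. Make the split deterministic and move all the randomness to a single coarse hash, and then prove the set-level concentration; without that, the forall guarantee does not go through.
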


Note:  For thoroughness, we count the factors of $\ell$ and
$1/\epsilon$.  The reader is warned, however, that we are aware of
possible improvements, so the reader may want to abstract $\ell^8/\epsilon^3$ and
$\ell^5/\epsilon^3$ to the simpler expression
$(\ell/\epsilon)^{O(1)}$.  Similarly, the power $0.2$ of $k$ in the
preprocessing costs can be improved but with constant-factor increases
to runtime or number of measurements, and we suspect that expensive
preprocessing can be eliminated altogether.  (The focus of this paper
is just sublinear runtime and constant-factor-optimal number of
measurements, while other aspects of the algorithm are reasonable but
not optimal.)

\subsection{Organization of this paper}

This paper is organized as follows.  Note that we are specifying a
measurement matrix and a decoding algorithm; we refer to the
combination as a {\em system}.  In Section~\ref{sec:defs}, we present
notation and definitions.  In Section~\ref{sec:main}, we present our
main result, in three subsections.  
In Section~\ref{sec:weak}, we show
how to get a
Weak system, that recovers all but $k/2$ of $k$ heavy hitters, while
not increasing the noise by much.
This is a relatively slow Weak system that illustrates several
concepts, on which we build.  
in
Section~\ref{sec:sublinear}, we build on Section~\ref{sec:weak} to give
a sublinear time version of the Weak system. 
In Section~\ref{sec:toplevel}, we show
how to get a solution to the main problem.
In Section~\ref{sec:open}, we give several open
problems in connection with optimizing and generalizing our results.

\section{Preliminaries and Definitions}
\label{sec:defs}

In this section, we present notation and definitions.

\tightpgh{Systems.}
We will usually present systems for measurement and decoding as
algorithmic units without all the details of the measurement matrix
and decoding algorithm.  We will then usually argue correctness at the
system level, then argue that the system can be implemented by a matrix
with the claimed number of rows and a decoding algorithm with the
claimed runtime.

\tightpgh{Notation.}
For any vector $\signal$, we write $\signal_k$ for the best $k$-term
approximation to $\signal$ or the $k$'th element of $\signal$; it will
be clear from context.  For any vector $\signal$, we write
$\supp{\signal}$ for the {\em support} of $x$, {\em i.e.},
$\{i\,:\,\signal_i\ne 0\}$.

\tightpgh{Normalization.}
Our overall goal is to approximate $\signal_k$, the best $k$-term
approximation to $\signal$.  For the analysis in this paper, it will
be convenient to normalize $\signal$ so that
$\nerr{\signal-\signal_k}_1=1$.  It is not necessary for the decoding
algorithm
to know the original value of $\nerr{\signal-\signal_k}_1$.

\tightpgh{Heavy Hitters.}
Suppose a signal $\signal$ can be written as
$\signal=\signaly+\signalz$, where $|\supp{\signaly}|\le k$ and
$\nerr{\signalz}_1\le\eta$.  Then we say that $\supp{\signaly}$ are
the {\em $(k,\eta)$-heavy-hitters} of $\signal$.
We will frequently drop the $(k,\eta)$- when clear from context.
Ambiguity in the decomposition $\signal=\signaly+\signalz$ is inherent
in approximate sparse {\em problems} and will not
cause difficulty with our {\em algorithm}.

\tightpgh{Optimal number of measurements.}
For this paper, we only consider algorithms using the optimal number
of measurements, up to constant factors.\footnote{The optimal number
  of measurements, if $\epsilon$ is considered to be a constant,
  is~\cite{BaIndykPriceWoodruff:2010}
  $\Theta(k\log(N/k))=\Theta\left(\log\binom{N}{k}\right)$.  Our
  $\epsilon$ dependence is cubic (quadratic in a warmup algorithm),
  which is sub-optimal compared with the quadratic dependence in the
  best algorithms.}

We will use the following form of the Chernoff bound.

\begin{lemma}[Chernoff]
\label{lem:chernoff}
Fix real number $p$, $0<p<1$.  Let $X_1,X_2,\ldots,X_n$ be a set of
independent $0/1$-valued random variables with expectation $p$.
Let $X=\sum_i X_i$ and let
$\mu=pn$ denote $E[X]$.  For any $\delta>0$, we have
\[\Pr(X > (1+\delta)\mu) <
\left(\frac{e^\delta}{(1+\delta)^{1+\delta}}\right)^{\mu}
\le\left(\frac{e\mu}{a}\right)^{a},
\]
where $a=(1+\delta)\mu$.  If $a=\Omega(n)$ and $a>(1+\Omega(1))e\mu$,
then the above probability is
$p^{\Omega(n)}$.
\end{lemma}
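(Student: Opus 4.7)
The plan is to prove the Chernoff bound by the standard moment-generating-function argument, then derive the two specific forms claimed in the statement.

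First I would apply Markov's inequality to $e^{tX}$ for a parameter $t>0$ to be chosen later: $\Pr(X>(1+\delta)\mu) = \Pr(e^{tX} > e^{t(1+\delta)\mu}) \le E[e^{tX}]/e^{t(1+\delta)\mu}$. By independence, $E[e^{tX}] = \prod_i E[e^{tX_i}] = \prod_i (1 + p(e^t-1))$, and since $1+y \le e^y$ for all real $y$, this is at most $e^{\mu(e^t-1)}$. Optimizing by setting $t=\ln(1+\delta)$ gives $e^{\mu\delta}/(1+\delta)^{(1+\delta)\mu}$, which is the first inequality in the statement.

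Next I would verify the second inequality by substitution. Writing $1+\delta = a/\mu$ and $\delta = a/\mu - 1$, the left expression becomes $e^{a-\mu}(\mu/a)^{a}$. Since $e^{-\mu} \le 1$, this is at most $e^{a}(\mu/a)^{a} = (e\mu/a)^{a}$, as claimed. This is a purely algebraic step, not an obstacle.

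Finally, under the hypotheses $a = \Omega(n)$ and $a \ge (1+c)e\mu$ for some constant $c>0$, I would observe that $e\mu/a \le 1/(1+c)$ is bounded away from $1$, so $(e\mu/a)^{a} \le (1+c)^{-a} = e^{-\Omega(a)} = e^{-\Omega(n)}$ by the lower bound on $a$. To convert this $e^{-\Omega(n)}$ tail into the claimed $p^{\Omega(n)}$, I would use the identity $p^{\alpha n} = e^{\alpha n \ln p}$ and note that since $0<p<1$, choosing $\alpha = \Omega(1)/|\ln p|$ (a positive constant depending only on $p$) gives exactly $p^{\Omega(n)}$ as an upper bound on $e^{-\Omega(n)}$.

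The main ``obstacle,'' if any, is just keeping the asymptotic constants straight in the last step; the rest is standard and mechanical. Nothing here requires a genuinely new idea, so the proof will read as a short sequence of the familiar Chernoff manipulations specialized to the form used throughout the paper.
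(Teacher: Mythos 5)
The paper states this lemma without proof, treating it as a known form of the Chernoff bound, so there is no in-paper argument to compare against. Your moment-generating-function derivation of $\Pr(X>(1+\delta)\mu) \le (e^\delta/(1+\delta)^{1+\delta})^\mu$, and the substitution $1+\delta=a/\mu$ followed by dropping the $e^{-\mu}\le 1$ factor to get $(e\mu/a)^a$, are both standard and correct.

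The step to tighten is the conversion to $p^{\Omega(n)}$. You bound $(e\mu/a)^a \le (1+c)^{-a}\le e^{-Kn}$, using only that $e\mu/a$ is bounded below $1$, and then rewrite $e^{-Kn}=p^{Kn/|\ln p|}$; as you yourself note, the resulting exponent constant $K/|\ln p|$ depends on $p$ and tends to $0$ as $p\to 0$. Under a strict reading of ``Fix real number $p$'' this is acceptable, but the paper applies the lemma with $p$ standing for non-constant quantities such as $s/B$, so what the applications actually need is a $p$-uniform exponent. That stronger version does hold, and you obtain it by using $\mu=pn$ directly instead of routing through $e^{-\Omega(n)}$: write $a\ln(e\mu/a)=a\ln(en/a)+a\ln p$. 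The hypotheses force $c_1\le a/n\le 1$ and, since $ep\le e\mu/a<1/(1+c')$, also $|\ln p|>1+\ln(1+c')$; a short case check (on whether $a\ge c_1 n$ or $a>(1+c')e\mu$ is the binding lower bound on $a$) shows $a\ln(en/a)\le(1-\gamma)\,a|\ln p|$ with $\gamma=\ln(1+c')/\bigl(1+\ln(1+c')+\ln(1/c_1)\bigr)$, whence $(e\mu/a)^a\le p^{\gamma c_1 n}$ with an exponent depending only on the constants $c_1,c'$ in the hypotheses and not on $p$.
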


\tightpgh{Parameter summary.}  We use the parameter $k$ for sparsity
in the toplevel signal, but a different symbol, the parameter $s$, in
subroutines, so we can say things like, ``put $s=k/2^j$ in the $j$'th
iteration.''  Similarly, the parameters $\epsilon,\alpha$, and $\eta$
are related ``noise'' or approximation ratio parameters in the various
routines, and
$\zeta$ is an ``omission'' parameter, such that we guarantee
to recover all but $\zeta s$ heavy hitters in an $s$-sparse signal.

\section{Main Result}
\label{sec:main}

\subsection{Weak System}
\label{sec:weak}

We start with a Weak System.  Intutitively, a Weak system, operating
on measurements, cuts in half the number of unknown heavy hitters,
while not increasing tail noise by much.  We estimate all values in
$\signal$ and take the largest $O(k)$ estimates.  To estimate the
values, we hash all $N$ positions into $B=O(k)$ buckets.  Each
position $i$ to estimate has a $\Omega(1)$ probability of getting
hashed into a bucket with no (other) items larger than $1/k$ and the
sum of other items not much more than the average value, $1/B\approx 1/k$,
in which case the sum of values in the bucket estimates $\signal_i$ to
within $\pm1/k$.  By a concentration of measure argument for
{\em dependent} random variables, we conclude
that $\Omega(k)$ measurements are good except with probability
$p=2^{-O(k)}$, and, if we repeat $t=O(\log(N/k))$ times, some
$\Omega(k)$ items get more than $t/2$ correct estimates except with
probability $p^t=(k/N)^{k}$.  In the favorable case, the median
estimate is correct.  The failure probability is small enough to take
a union bound over all sets of $O(k)$ positions, so we conclude that
no set of $\Omega(k)$ estimates is bad, {\em i.e.}, there are at most,
say, $k/2$ failures, as desired.

We first present an algorithm that simply estimates {\em all} $[N]$ as
suitable candidates.  This makes the runtime slightly superlinear.
Below, we will show how to get a smaller set of candidates, which
speeds the algorithm at the
cost of a controlable increase in the number of measurements.

\begin{definition}
\label{def:weak}
A
Weak system
consists of parameters $N,s,B,\eta,\zeta$, an $m$-by-$N$
{\em measurement matrix}, $\mtx{\Phi}$, and a {\em decoding algorithm},
$\mathcal{D}$.  Consider signals $\signal$ that can be written
$\signal=\mathbf{y}+\mathbf{z}$, where $|\supp{\mathbf{y}}|\le s$,
$\supp{\mathbf{y}}\subseteq I$, and
$\nerr{\mathbf{z}}_1\le O(1)$.

Given the parameters, $I$, a measurement matrix $\mtx{\Phi}$, and measurements
$\mtx{\Phi}\signal$ for any $\signal$ with a decomposition above, the
decoding algorithm returns $\widehat\signal$, such that
$\signal=\widehat\signal+\widehat{\mathbf{y}}+\widehat{\mathbf{z}}$,,
where
$|\supp{\widehat{\signal}}|\le O(s)$,
$|\supp{\widehat{\mathbf{y}}}|\le \zeta s$, and 
$\nerr{\widehat{\mathbf{z}}}_1 \le \nerr{\mathbf{z}}_1+\eta$.
\end{definition}

Without loss of generality, we may assume that
$\supp{\mathbf{y}}\cap\supp{\mathbf{z}}
=\supp{\widehat{\mathbf{y}}}\cap\supp{\widehat{\mathbf{z}}}=\emptyset$,
but, in general, $\supp{\widehat{\mathbf{x}}}$ intersects both
$\supp{\widehat{\mathbf{y}}}$ and $\supp{\widehat{\mathbf{z}}}$.

The parameter $B$ will always be set to $2s$ in implementations.  We
prove correctness for general $B$ because the generality is needed to
prove Lemma~\ref{lem:filt} below.

\begin{lemma}[Weak]
\label{lem:weak}
With probability $1-\binom{N}{s}^{-\Omega(1)}$ over the choice of hash functions,
Algorithm~\ref{algo:weak}, with appropriate instantiations of
constants, is a correct Weak system that uses
$O(\eta^{-2}\zeta^{-4}s\log(N/s))$ measurements when $B=O(s)$ and runs in time
$O(|I|\eta^{-1}\zeta^{-2}\log(N/s))$.
\end{lemma}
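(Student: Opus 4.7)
I would prove this as a fairly standard count-sketch / count-median estimator analysis, with the forall guarantee coming from a careful union bound over all possible heavy-hitter supports.

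The algorithm (following the section overview) draws $t = \Theta(\zeta^{-2}\log(N/s))$ independent, pairwise-independent hash functions $h_1,\ldots,h_t:[N]\to[B]$ with $B = \Theta(s/(\eta\zeta^2))$, applies random $\pm 1$ signs, and records the $Bt$ signed bucket sums. To decode, for each $i\in I$ it returns the median of the $t$ bucket sums containing $i$ and keeps the $O(s)$ largest in magnitude. Fix a heavy hitter $i\in\supp{\mathbf{y}}$ and a round $j$; call the round \emph{good for} $i$ when no other heavy hitter lands in bucket $h_j(i)$ and $\sum_{i'\ne i,\,h_j(i')=h_j(i)}|\mathbf{z}_{i'}|\le \eta/s$. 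Pairwise independence gives the first bad probability $\le (s-1)/B = O(\eta\zeta^2)$, and Markov applied to the expected per-bucket tail mass $\nerr{\mathbf{z}}_1/B$ gives the second $O(\zeta^2)$, so the round is good with probability $\ge 1-O(\zeta^2)$; conditioned on good, the bucket sum is within $\pm\eta/s$ of $\signal_i$.

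Since the $t$ hash families are independent, for each fixed heavy hitter the number of good rounds is a sum of independent indicators of mean $\ge (1-O(\zeta^2))t$, and Chernoff gives a median within $\pm\eta/s$ of $\signal_i$ except with probability $2^{-\Omega(t)}$. The main obstacle is upgrading this individual bound to a joint one strong enough to withstand the forall union bound over the $\binom{N}{s}$ possible supports of $\mathbf{y}$. For this I would use the negative association of bucket-membership indicators within a single round (each coordinate lies in exactly one bucket) to show that for any fixed subset $S$ of coordinates the count of bad positions in $S$ during one round is Chernoff-concentrated around $O(\zeta^2)|S|$. Chaining across the $t$ independent rounds bounds the probability that a prespecified subset of $\zeta s$ heavy hitters is bad in a majority of rounds by $2^{-\Omega(\zeta s\log(N/s))}$; multiplying by the $\binom{N}{s}\binom{s}{\zeta s}$ choices of support and bad subset still leaves failure probability $\binom{N}{s}^{-\Omega(1)}$.

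It remains to handle the top-$O(s)$ cutoff and assemble the output. A parallel Markov argument shows that in any round the number of buckets with tail mass exceeding $\eta/s$ is $O(s)$, so only $O(s)$ coordinates outside $\supp{\mathbf{y}}$ can survive to the top-$O(s)$ list. Hence $\supp{\widehat{\signal}}$ has size $O(s)$ and contains all but at most $\zeta s$ heavy hitters; on the retained heavy hitters the per-coordinate error is $\le\eta/s$, and on any spurious coordinate both the estimate and the true value are $O(\eta/s)$. Summing $O(s)$ residuals of size $O(\eta/s)$ and combining with the unretained $\mathbf{y}$-coordinates yields the required decomposition with $\nerr{\widehat{\mathbf{z}}}_1\le\nerr{\mathbf{z}}_1+\eta$ after tuning constants. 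The measurement count $Bt$ and decoding time $O(|I|t)$ then match the stated bounds up to the claimed polynomial dependence on $\eta$ and $\zeta$.
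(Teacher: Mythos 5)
Your proposal captures the count-median skeleton and the need for a union bound of scale $\binom{N}{s}$, but it has gaps that undermine the forall guarantee; two of these are serious.

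First, the tail-noise argument. You bound the per-bucket tail mass by Markov over the random hash for a \emph{fixed} signal. That yields a per-signal guarantee, not one that holds simultaneously for every tail $\mathbf{z}$ once $\mtx{\Phi}$ is fixed. In the forall model the adversary may choose $\mathbf{z}$ after seeing $\mtx{\Phi}$, so the tail bound itself must be pushed through the union bound. The paper does this in Items~\ref{en:flat} and~\ref{en:tail}: it proves a high-probability bound on $\nerr{\mtx{\Phi}_A\nu}_1$ for every \emph{flat} tail $\nu$ of each support size (union-bounding over the $\binom{N}{|F|}$ choices of $F$), and then extends to arbitrary tails by writing $\nu$ as a subconvex combination of flat tails. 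Your argument has no such mechanism. Relatedly, your $t=\Theta(\zeta^{-2}\log(N/s))$ is missing the $\eta^{-1}$ factor in the paper's $t$; that factor is exactly what makes the flat-tail exponent $(s/N)^{\Omega(|F|)}$ come out, and without it neither the tail union bound nor the stated $O(\eta^{-2}\zeta^{-4}s\log(N/s))$ measurement count is reproduced.

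Second, the decoy set. You union-bound only over $S=\supp{\mathbf{y}}$ and over bad subsets inside it, then assert that on any spurious retained coordinate ``both the estimate and the true value are $O(\eta/s)$.'' That is unjustified: a coordinate $i'\notin\supp{\mathbf{y}}$ typically enters the top-$O(s)$ list precisely because its estimate was inflated by collisions, which is the failure mode you need to count. The paper's collision bound (Item~\ref{en:collision}) is therefore proved for $S\cup D$ with $D$ ranging over \emph{all} size-$s$ decoy sets, so that it can be applied to whatever $\supp{\widehat\signal}$ turns out to be (which depends on the randomness). Without ranging over $D$ you cannot bound the number of bad estimates inside $\supp{\widehat\signal}$, and the displacement accounting that produces $\widehat{\mathbf{y}}$ and $\widehat{\mathbf{z}}$ in Item~\ref{en:conc} does not go through.

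Two smaller points. Collision events for distinct heavy hitters are not obviously negatively associated (two heavy hitters landing in the same bucket makes \emph{both} collide), so the NA appeal needs more care; the paper instead orders the hashing, defines $X_j$ as the indicator that the $j$-th item lands in an already-occupied bucket, observes that $2\sum_j X_j$ dominates the number of colliding items, and applies a coupled Chernoff bound to the $X_j$. Also, the paper's $\mtx{\Phi}$ is a $0/1$ matrix without random signs; non-negativity is used in Item~\ref{en:tail} to make $\nerr{\mtx{\Phi}_A\nu}_1$ additive in $\nu$, which is what licenses the subconvex-combination step, so the count-sketch-style signs you introduce would actually break that part of the argument.
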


\begin{algorithm}
\caption{A Weak system.}
\label{algo:weak}
\begin{algorithmic}
\STATE{ Input: $N$, sparsity $s$, noise
  $\eta$, $\mtx{\Phi}$,
  $\mtx{\Phi}\signal$, hash parameter $B$,  omission
  $\zeta$, candidate set $I=[N]$}
\STATE{ Output: $\widehat\signal$}
\FOR{ $j=1$ to $t=O(\eta^{-1}\zeta^{-2}\log(N/s)/\log(B/s))$}
  \STATE{Hash $h:[N]\to[O(\eta^{-1}\zeta^{-2}B)]$}
  \FOR { $i\in I$}
    \STATE {$\signal^{(j)}_i=\sum_{h(i')=h(i)} \signal_{i'}$ // sum of signal
        values in $i$'s hash bucket---an element of input $\mtx{\Phi}\signal$}
  \ENDFOR
\ENDFOR
\FOR{ $i\in I$}
  \STATE{ Let $\signal'_i$ be the median over $j\le t$ of
    $\signal^{(j)}_i$}
\ENDFOR
\STATE{ Zero out all but the largest $O(s)$ elements of $\signal'$; get
  $\widehat\signal$}
\RETURN{ $\widehat\signal$} 
\end{algorithmic}
\end{algorithm}

\begin{proof}
The number of measurements and runtime are as claimed by construction,
so we show correctness.  There are several parts to this, and much of
this is similar to or implicit in previous work.  We show that, with
probability at least 3/4 over the choice of $\mtx\Phi$:
\begin{enumerate}
\item\label{en:collision} For any set $S=\supp{\signaly}$ of $s$ heavy hitters and any
  set $D=\supp{\widehat{\signal}}$ of
  $s$ ``decoys'' that might displace $S$, at
  most $O(\zeta s)$ elements of $S\cup D$ collide, in at least $t/4$ of
  their buckets, with an element of
  $S\cup D\cup T$, where $T$ is the set of the top
  $O(s/(\zeta\eta))$
  elements.
\item\label{en:flat} Let $A$ be the set of rows of $\mtx{\Phi}$ with a
  one anywhere in
  columns $S\cup D$ and let $\mtx{\Phi}_A$ be $\mtx{\Phi}$ restricted to the
  rows of $A$.  Let $F$ be a set of $\Omega(s/(\zeta\eta))$ columns
  disjoint from $S\cup D\cup T$, and let
  $\nu$ be an $N$-vector such that
  $\nu=1/|F|$ on $F$ and zero elsewhere ($\nu$ is a ``flat tail'').
  We have
  $\nerr{\mtx{\Phi}_A \nu}_1\le O(\eta\zeta t)$.
\item\label{en:tail} Let $A$, $\mtx{\Phi}_A$.
  For any $\nu$ supported off $S\cup D$
  with $\nerr{\nu}_1=1$ and
  $\nerr{\nu}_\infty \le O(1/|\supp{T}|)=O(\eta\zeta/s)$,
  we have
  $\nerr{\mtx{\Phi}_A \nu}_1\le O(\eta\zeta t)$.
\item\label{en:est} There is a decomposition
  $\mathbf{x}=\widehat{\mathbf{x}'}+\widehat{\mathbf{y}'}+\widehat{\mathbf{z}'}$
  such that:
\begin{itemize}
\item $\widehat{\mathbf{x}'}$ equals $\mathbf{x}'$ on
  $\supp{\mathbf{x}_s}$ and zero elsewhere, where $\mathbf{x}'$ is as in Algorithm~\ref{algo:weak},
\item $|\supp{\widehat{\mathbf{y}'}}|\le\zeta s$
\item $\nerr{{\widehat{\mathbf{z}'}}}_1\le\nerr{\mathbf{z}}_1+O(\eta)$.
\end{itemize}
\item\label{en:conc} (The lemma's conclusion.)  There's a decomposition
$\signal=\widehat\signal+\widehat{\mathbf{y}}+\widehat{\mathbf{z}}$
with
$|\supp{\widehat{\signal}}|\le O(s)$,
$|\supp{\widehat{\mathbf{y}}}|\le \zeta s$, and 
$\nerr{\widehat{\mathbf{z}}}_1 \le \nerr{\mathbf{z}}_1+\eta$.
\end{enumerate}

\iffull
The dependence is as follows.  Item~\ref{en:tail} for general tails
follows from Item~\ref{en:flat} for flat tails.  Item~\ref{en:est}
follows from Items~\ref{en:collision} and~\ref{en:tail} and shows that
the {\em estimates} lead to an acceptable decomposition of $\signal$,
assuming {\em some} choice (generally unknown to the algorithm) of
support for $\widehat{\mathbf{x}}$, namely $\supp{\signal_s}$.
Finally, Item~\ref{en:conc} follows from Item~\ref{en:est} by
considering the displacement of an element in the support of $\signal_s$
by an element in the Algorithm's output, {\em i.e.}, the support of
$\widehat{\mathbf{x}}$.  Only Items~\ref{en:collision}
and~\ref{en:flat} involve probabilistic arguments.

\paragraph{Item~\ref{en:collision}.}
Fix a decomposition $\signal=\mathbf{y}+\mathbf{z}$ as
above, let $S$ equal $\supp{\mathbf{y}}$, and let $D\subseteq[N]$ be
any set of $s$ positions.  (We only care about the case
$D=\supp{\widehat\signal}$, but, to handle stochastic dependence
issues, it is
necessary to prove the result for a general $D$ of this size.)
We want to
show that at most $O(\zeta s)$ elements of $S\cup D$ collide with
one of the top $O(s/(\zeta\eta))$ elements in at least $t/4$ of their
$t$
buckets.  Let $T$ be the set of top $O(s/(\zeta\eta))$ elements in
$[N]$.

Intuitively, there are
$\Omega(\eta^{-1}\zeta^{-2}B)$
hash buckets and at most
$O(|T|)$ are ever occupied by an element of $S\cup D\cup T$, so each
element of $S\cup D$ has at most a
$O(T\eta\zeta^2/B)=O(s\zeta/B)\le O(\zeta)$ chance to collide when it
is hashed.  As we discuss below,
this
implies that the expected number of collisions (at the time of hashing
{\em or later}) is $O(s\zeta/B)$
in each of the $t$ repetitions.
If all estimates (over all $i$ and all
repetitions) were independent, we could apply the Chernoff bound
Lemma~\ref{lem:chernoff}, and conclude that the number of failed
element-repetition pairs exceeds $O(\zeta |S\cup D|t)=O(\zeta st)$ only
with probability $\binom{N}{|T|}^{-\Omega(1)}$, small enough to take a
union bound over all $(S,D,T)$, which is acceptably small.  But it is
easy to see (and also see
below) that there is at least some small dependence.  So instead we
proceed as follows, using a 
form of the
Method of Bounded Differences and coupling~\cite{DP,MR,MU}.

First hash the elements of $T\setminus(S\cup D)$.  Then hash the
elements of $S\cup D$,
in some arbitrary order.  Let $X_j$ be the
0/1-valued random variable that takes the value 1 if the $j$'th
element of $S\cup D$ is hashed into a bucket that is bad (occupied by
an element of $S\cup D\cup T$) at
the time of $j$'s
hashing.  As above, each $X_j$ has $E[X_j]\le \zeta$.

Note that even if some $i\in S\cup D$ is isolated at the time of its
hashing, $i$ may become clobbered by an element of $j\in S\cup D$
that is later hashed into its bucket.  So $\sum_j X_j$ is {\em not}
the total number of failed estimates.  But observe that if some $j$ is
hashed into the same bucket as previously-hashed items, it can only
clobber at most one other previously-unclobbered element $i$, because $j$ is
only hashed into one bucket, and that bucket has at most one
previously-unclobbered item.  It follows that $2\sum_j X_j$ is an
upper bound on the number of colliding items in $S\cup D$, where,
for some $p$, the $X_j$'s are 0/1-valued random variables with the
expectation of each $X_j$ bounded by $p$, even conditioned on any
outcomes of $X_{<j}$.
This is enough to get the conclusion of the
Chernoff inequality with independent trials of failure probability
$p$, by a standard coupling argument.  (See, {\em e.g.}, exercise 1.7
of~\cite{DP}.)
In the standard proof of Chernoff, we have, for any $\lambda>0$,
\begin{eqnarray*}
\Pr\left( \sum_{j=1}^n X_j \ge a\right)
&  =  & \Pr\left( e^{\lambda \sum X_j} \ge e^{\lambda a}\right)\\
&  =  & \Pr\left( \prod e^{\lambda X_j} \ge e^{\lambda a}\right)\\
& \le & E\left[\prod e^{\lambda X_j}\right]/e^{\lambda a}.
\end{eqnarray*}
At this point, if the $X_j$'s were independent, we would get the
product of expectations.  Instead, we proceed as follows, where
$Y_j$'s are independent random variables with expectation $p$.
\begin{eqnarray*}
\Pr\left( \sum X_j \ge a\right)
& \le & E\left[\prod e^{\lambda X_j}\right]/e^{\lambda a}\\
&  =  & E\left[e^{\lambda X_n}\prod_{j<n} e^{\lambda X_j}\right]/e^{\lambda a}\\
&  =  & \sum_{\vec v}(\Pr\left(X_n = 0|X_{<n}=\vec v\right) + \Pr\left(X_n = 1|X_{<n} = \vec v\right)e^\lambda)\\
&     & \quad\cdot
        \Pr( X_{<n} = \vec v)
        e^{\lambda\cdot\text{weight}(\vec v)}/e^{\lambda a}\\
& \le & \sum_{\vec v}(1 - p + pe^\lambda) \Pr( X_{<n} = \vec v)
        e^{\lambda\cdot\text{weight}(\vec v)}/e^{\lambda a}\\
& = & E\left[e^{\lambda Y_n}\prod_{j<n} e^{\lambda X_j}\right]/e^{\lambda a}.
\end{eqnarray*}
Proceed inductively, getting
\[\Pr\left( \sum X_j \ge a\right)\le E\left[\prod e^{\lambda Y_j}\right]/e^{\lambda a}
  =\frac{\prod E\left[e^{\lambda Y_j}\right]}{e^{\lambda a}},\]
to which the rest of the usual Chernoff-type bound applies.
Thus the expected number of pairs of elements
in $S\cup D$ and repetition that collide is at
most $O(\zeta st)$.

Having shown that our dependent collision events behave like
independent events up to constants, we now go over the arithmetic,
assuming independent collisions.  Each $i\in S\cup D$ fails in each
repetition with probability at most $O(\zeta s/B)$ (wlog, exactly
$\zeta s/B$ for now).  Among the $(2st)$ pairs of $i\in S\cup D$ and
repetition, we expect to get $\mu=O(\zeta s^2t/B)$ failed pairs, and
we get at least $a\ge \zeta st$ failures with
probability at most $\left(e\mu/a\right)^a$, by
Lemma~\ref{lem:chernoff}, Chernoff.  So the
failure probability is
\[\left(e\mu/a\right)^a
=(s/B)^{\Omega(\zeta s t)}
=(s/B)^{\Omega(\eta^{-1}\zeta^{-1}s\log(N/s)/\log(B/s))}
=(s/N)^{\Omega(\eta^{-1}\zeta^{-1}s)},\]
which is small enough to to take a union bound over $(T,S,D)$.  In the
favorable
case, there is only a fraction $O(\zeta)$ of all pairs of item and
repetition with a failed estimate.  It follows, after adjusting
constants, that less than $(1/2)\zeta s$ items get more that $t/4$ failed
original estimates.  The remaining $(1-\zeta/2)s$ items get good final
median estimate (even if another $t/4$ original estimates fail for
other reasons, as we discuss below), since a median estimate fails
only if a majority of
mediand estimates fail.

\paragraph{Item~\ref{en:flat}.}  Fix $S,D,T,|F|,F$, choose the $S\cup D$
columns and the $T$ columns of $\mtx\Phi$ (arbitrarily for this
discussion), and
thereby define $A$ (the rows of $\mtx\Phi$ with a 1 in columns $S\cup
D$) and $\nu$ (equal to $1/|F|$ on $F$ and zero elsewhere), as above.
We now hash the elements of
$F$ at random, {\em i.e.}, choose the $F$ columns of $\mtx\Phi$.  In
each repetition, there are 
$O(\eta^{-1}\zeta^{-2}B)$ buckets, of which $O(s)$ are in $A$.
It follows that each element in $F$ hashes to $A$ in
each repetition with probability $O(\eta\zeta^2 s/B)$.  Counting repetitions,
there are a total of $t|F|$ elements that each hash into $A$ with
probability $\eta\zeta^2 s/B$.   We expect
$\mu = \eta\zeta^2 t|F|s/B$ element-repetition pairs of $t|F|$ total to
hash into $A$ and we 
get more than $a=\eta\zeta^2 t|F|$ with probability at most
\[(e\mu/a)^a
= (s/B)^{\Omega(\eta\zeta^2 t|F|)}
\le (s/B)^{\Omega(|F|\log(N/s)/\log(B/s))}
= (s/N)^{\Omega(|F|)},
\]
which is small enough to take a union bound over all $S,D,T,|F|,F$.
Since elements of $\nu$ have magnitude $1/|F|$, it follows that
$\nerr{\mtx\Phi_A\nu}_1\le a/|F|=O(\eta\zeta^2 t)$, so\footnote{This
  seems loose by a factor $\zeta$, but local fixes, like replacing
  $\zeta$ with
  $\sqrt{\zeta}$, do not seem to work.  We speculate that better
  dependence on $\zeta$ is possible.} we conclude
$\nerr{\mtx\Phi_A\nu}_1\le O(\eta\zeta t)$.

At this point, we have that, except with probability $1/4$, at most
$O(\zeta s)$ of $S\cup D$
items collide with $S\cup D$ or with an element of $T$ (of magnitude
at least $\eta\zeta/s$) in more than $t/4$ of their repetitions and no
flat tail of support size at least
$s/(\eta\zeta)$ contributes more than a constant times its expected
amount, which is
$O(\eta\zeta t)$ if the magnitude of $\nu$ is maximal, into the buckets $A$
containing the top $s$ heavy
hitters.  Conditioned on this holding, we proceed
non-probabilistically.

\paragraph{Item~\ref{en:tail}.}  Let $\nu$ be any vector supported
disjointly from $S\cup D$ with $\nerr{\nu}_1=1$ and
$\nerr{\nu}_\infty\le O(\zeta\eta/s)$ as above.
Since $\mtx\Phi$ is non-negative, we may assume that $\nu$ is
non-negative, as well, by replacing $\nu$ with $|\nu|$.  Next, round
each non-zero element of $\nu$ up to the nearest power of 2, at most doubling
$\nu$.  Write $\nu=\sum_i w_i\nu_i$, where $\nu_i$ takes on only the
values $0$ and $2^{-i}$, and $w_i$ is 0 or 1.
Also write $\nu = \nu'+\nu''$, where $\nu_i$ contributes to $\nu'$ if
the support of $\nu_i$ is at least $s/(\eta\zeta)$ and $\nu_i$
contributes to $\nu''$, otherwise.
The $\nu_i$'s contributing to $\nu'$ are multiples of flat tails of
the kind handled in
Item~\ref{en:flat} and their sum, $\nu'$, which has 1-norm at most 1, is
a subconvex combination of such flat tails.  Since $\nerr{\mtx\Phi_A\nu}_1$
is subadditive in $\nu$ (actually, strictly additive under our
non-negativity assumption), we get
$\nerr{\mtx\Phi_A\nu'}_1\le O(\eta\zeta t)$.

Now consider the sum $\nu''$ of $\nu_i$ with support less than
$s/(\eta\zeta)$.  In general, these {\em can} contribute more than
their expected value, but not {\em much} more than the expected value,
and the expected value is typically much less than for other flat
tails.  We will handle the sum of these at once (without using the
convex combination argurment), so we may assume the supports are the
maximum, $s/(\eta\zeta)$, by increasing each actual support to a
superset.  Also, we may assume that the corresponding $w_i$'s are as
large as possible, {\em i.e.}, $w_i=1$ if $2^{-i}\le \eta\zeta/s$ and
$w_i=0$, otherwise (so that the maximum magnitude is $\eta\zeta /s$).
With these assumptions, each such flat tail
contributes not much more than its expected number, $O(st)$, of elements
of magnitude $2^{-i}=2^{-j}\eta\zeta/s$ for some $j\ge 0$.  Thus
$\nerr{\mtx\Phi_A \nu_i}_1 = O(\eta\zeta t2^{-j})$ for $i$ and $j$ as above.
The {\em sum} (which can be greater than a convex combination of
the original contribution but, it turns out, is at most a constant
times a
convex combination under our assumptions) contributes
$\nerr{\mtx\Phi_A\nu''}_1 \le O(\eta\zeta t\sum_{j\ge 0} 2^{-j})
=O(\eta\zeta t)$,
as desired.

Thus $\nerr{\mtx\Phi_A\nu}_1\le
\nerr{\mtx\Phi_A\nu'}_1+\nerr{\mtx\Phi_A\nu''}_1
\le O(\eta\zeta t)$.

\paragraph{Item~\ref{en:est}.}  Let $\widehat{\mathbf{x}}'$ be as
above.  In Item~\ref{en:collision}, we showed that an acceptable
number $O(\zeta s)$ elements of $S\cup D$ suffer collisions; here we
we consider only the elements of $S\cup D$ that do not collide with
$S\cup D\cup T$.  So we can consider only the tail elements
that are still relevant, {\em i.e.}, the elements of
$[N]\setminus(S\cup D\cup T)$, which have magnitude at most
$\eta\zeta/s$.  These form a tail $\nu$ as described in
Item~\ref{en:tail}.  Consider $i$ to be a failure if
$|\widehat{\signal}'_i-\signal_i|\ge \Omega(\eta/s)$.  Then each
failed $i$ in $\widehat{\mathbf{x}}'$ requires $t/2$ failed
$i$'s in $\signal^{(j)}$'s and, since collisions only account for
$t/4$ $i$'s in $\signal^{(j)}$'s, each failed $i$ in
$\widehat{\mathbf{x}}'$ that does not fail due to collisions also
requires $\Omega(t)$ failed $i$'s in $\signal^{(j)}$'s.  Thus each
failed but non-colliding $i$ accounts for $\Omega(t \eta/s)$ of
$\nerr{\mtx\Phi_A\nu}$.  Since
$\nerr{\mtx\Phi_A\nu}\le O(\eta\zeta t)$, there can be at most
$O(\zeta s)$ failures, as desired.  The remaining at-most-$s$ estimates
of $\signal_s$ each are good to within $O(\eta/s)$, additively, so the
total 1-norm of the estimation errors is $O(\eta)$, as desired.

\paragraph{Item~\ref{en:conc}.}
To complete our analysis of correctness, we describe
$\widehat\signal,\widehat{\mathbf{y}}$, and
$\widehat{\mathbf{z}}$ and show that they have
the claimed properties.  This is summarized in
Table~\ref{table:decomp}.

\begin{table}
\caption{Contributions $\signal_i-\widehat\signal_i$ to
  $\widehat{\mathbf{y}}$ and to $\widehat{\mathbf{z}}$ from
  $i\in\supp{\mathbf{y}}$ and
  $i\in\supp{\mathbf{z}}$, according to whether $i\in\widehat{\signal}$,
  whether $i\in\supp{\signal_i}$ has a good or bad estimate ({\em i.e.} whether
  or not the median
  estimate is good to within $\pm O(\eta/s)$), or,
  if $i\in\supp{\mathbf{y}}\setminus\widehat{\mathbf{\signal}}$, according to
  whether $i$ was displaced by $i'$ with a good or bad estimate, under
  an arbitrary pairing between
  $i\in\supp{\mathbf{y}}\setminus\supp{\widehat\signal}$ and 
  $i'\in\supp{\widehat{\signal}}\setminus\supp{\mathbf{y}}$.  Note that
  zero may be a good estimate.}
\label{table:decomp}
\medskip
\begin{center}
\begin{tabular}{|r|c|c|c|c|}
\cline{2-5}
\multicolumn{1}{c|}{~}
& \multicolumn{2}{c|}{$i\in\supp{\mathbf{y}}$}
& \multicolumn{2}{c|}{$i\in\supp{\mathbf{z}}$} \\
\cline{2-5}
\multicolumn{1}{c|}{~}
& Good     & Bad      & Good     & Bad     \\
\multicolumn{1}{c|}{~}
& estimate & estimate & estimate & estimate\\
\cline{2-5}
\hline
{$i\in\supp{\widehat\signal}$} &
  $\widehat\signalz$ & $\widehat\signaly$ & $\widehat\signalz$ & $\widehat\signaly$ \\
\hline
{$i\notin\supp{\widehat\signal}$; Displaced by bad estimate} &
  \multicolumn{1}{c}{$\widehat\signaly$} &
  $\widehat\signaly$ & 
  \multicolumn{1}{c}{$\widehat\signalz$} & $\widehat\signalz$ \\
\cline{1-2}
{$i\notin\supp{\widehat\signal}$; Displaced by good estimate} &
  $\widehat\signalz$ & $\widehat\signaly$ & 
  \multicolumn{1}{c}{$\widehat\signalz$} & $\widehat\signalz$ \\
\hline
\end{tabular}
\end{center}
\end{table}

\begin{itemize}
\item The pseudocode Algorithm~\ref{algo:weak} returns
  $\widehat\signal$, which has support size $O(s)$.
\item Elements $i\in\supp{\widehat\signal}$ with a good estimate (to
  within $\pm O(\eta/s)$) contribute $\signal_i-\widehat\signal_i$ to
  $\widehat{\mathbf{z}}$.  There are at most $O(s)$ of these, each
  contributing $O(\eta/s)$, for total contribution $O(\eta)$ to
  $\widehat{\mathbf{z}}$.
\item Elements $i\in\supp{\widehat\signal}$ with a bad estimate (not
  to within $\pm O(\eta/s)$) contribute $\signal_i-\widehat\signal_i$ to
  $\widehat{\mathbf{y}}$.  There are at most $O(\zeta s)$ of these.
\item Elements $i\in\supp{\mathbf{z}}\setminus\supp{\widehat\signal}$
  contribute $\signal_i$ to $\widehat{\mathbf{z}}$.  The $\ell_1$ norm of these is
  at most $\nerr{\mathbf{z}}$.
\item Elements $i\in\supp{\mathbf{y}}\setminus\supp{\widehat\signal}$
  with a good estimate that are nevertheless displaced by another
  element $i'\in\supp{\widehat\signal}\setminus\supp{\mathbf{y}}$ with
  a good estimate contribute to $\widehat{\mathbf{z}}$.
  There are at most $s$ of these.  While the value $\signal_i$ may be
  large and make a large contribution to $\widehat{\mathbf{z}}$, this
  is offset by $\signal_{i'}$ satisfying, for some $c$,
$|\signal_{i'}|
\ge |\widehat{\signal}_{i'}|-c\eta/s
\ge |\widehat{\signal}_{i}|-c\eta/s
\ge |\signal_{i}|-2c\eta/s$, which 
  contributes to $\mathbf{z}$ but {\em not} to
  $\widehat{\mathbf{z}}$.  Thus the net contribution to
  $\widehat{\mathbf{z}}$ is at most $O(\eta/s)$ for each of the $O(s)$
  of these $i$, for
  a total $O(\eta)$ contribution to $\widehat{\mathbf{z}}$.

  The contributions of such $i$ and $i'$ are summarized in the
  following table, whence the reader can confirm that
  $(\signaly+\signalz)_{\{i,i'\}}
  =(\widehat\signal+\widehat\signaly+\widehat\signalz)_{\{i,i'\}}$ and
  $\nerr{\widehat\signalz_{\{i,i'\}}}\le\nerr{\signalz_{\{i,i'\}}}+O(\eta/s)$.

\begin{center}
\begin{tabular}{|c||c|c|c|c|c|}
\hline
 & $\signaly$ & $\signalz$ & $\widehat\signal$ & $\widehat\signaly$ &
$\widehat\signalz$ \\
\hline
\hline
$i$ & $\signal_i$ & & & & $\signal_i$ \\
\hline
$i'$ & & $\signal_{i'}$ & $\widehat{\signal}_{i'}$ & &
$\signal_{i'}-\widehat{\signal}_{i'}$ \\
\hline
\end{tabular}
\end{center}

\item Elements $i\in\supp{\mathbf{y}}\setminus\supp{\widehat\signal}$
  that themselves have bad estimates or are displaced by elements with
  bad estimates contribute $\signal_i$ to $\widehat{\mathbf{y}}$.  There are at
  most $\zeta s$ bad estimates overall, so there are at most $O(\zeta s)$
  of these.
\end{itemize}

We have shown that $|\supp{\widehat\signaly}|\le O(\zeta s)$ and
$\nerr{\widehat\signalz}_1\le\nerr{\signalz}_1+O(\eta)$.  By adjusting
constants in the algorithm, we can arrange for the conclusion of the Lemma.

\else
The full version of this paper contains a complete proof.
\fi
\end{proof}

\subsection{Sublinear Time}
\label{sec:sublinear}

In this section, we introduce a way to limit $I$ to get a sublinear time
Weak system.  Since the runtime of the weak system will dominate
the overall runtime, it  follows that the overall algorithm will have
sublinear time.  We first give a basic algorithm with runtime
approximately $\sqrt{kN}$, then we generalize from
$\sqrt{kN}=k(N/k)^{1/2}$ to $\ell^{O(1)}k(N/k)^{1/\ell}$ for any
positive integer $\ell$, but with number of measurements suboptimal by
the factor $\ell^{O(1)}$.

The basic idea, for $\ell=2$ and (ignoring for now the small effects
of $\epsilon$ that we set to $\Omega(1)$), is as follows.  Hash
$h:[N]\to[\sqrt{kN}]$, and repeat a total of two times.  In each
repetition, a heavy hitter avoids collisions except with probability
$k/\sqrt{kN}=\sqrt{k/N}$.  Also, the average amount of tail noise (sum
of others in the bucket) is
$1/\sqrt{kN}$, so the tail noise exceeds $1/k$ on at most the fraction
$k/\sqrt{kN}=\sqrt{k/N}$ of the buckets.   So a heavy hitter dominates
its bucket except with probability $O(\sqrt{k/N})$.  The heavy
hitter dominates in at least one of the two repetitions with failure
probability equal to the square of that, or $O(k/N)$, which is what we
would need
to apply the Chernoff bound and to conclude
that, except with probability $\binom{N}{k}^{-1}$ (which is small
enough to take our union bound), $\Omega(k)$ of
the heavy hitters are isolated in low-noise buckets.  There is some
dependence here, which is handled as in Section~\ref{sec:weak}.

Now focus on one of the two repetitions.
We can form a new signal $x'$ of length $N'=\sqrt{kN}$ and sparsity
$k'=\Omega(k)$.  The signal $x'$ is
indexed by hash buckets and $x'_j=\sum_{h(i)=j} x_i$, {\em i.e.}, we
sum the values in $x$ that are hashed to the same bucket.  The
original $(N,k)$ signal (of length $N$ and
sparsity $k$) and a new $(N',k')=(\sqrt{kN},\Omega(k))$ signal form what
we call a two-level {\em signal filtration}, of which there are two,
for the two repetitions.

For each filtration, run the Weak system Algorithm~\ref{algo:weak} on
the $(N',k')$
signal $\signal'$, getting a set $H$ of $\Theta(k)$ heavy hitters.  This
uses $O(k'\log(N'/k'))=O(k\log(N/k))$ measurements and runtime led by
the factor $N'=\sqrt{kN}$.  Form the
set $I=h^{-1}(H)$ of indices to the {\em original} signal.  Finally,
run the Weak system on the original
signal, but with index set $I$.  This also takes $O(k\log(N/k))$
measurements and runtime led by the factor $|I|=\sqrt{kN}$.  Thus the
overall runtime is given by the time to make two exhaustive searches
over spaces of size about $\sqrt{kN}$, on each of two repetitions,
{\em i.e.}, $\ell$ repetitions of $\ell$ exhaustive searches over
spaces of size $k(N/k)^{1/\ell}$, for $\ell=2$.  For correctness, we
need to argue that the filtration is faithful to the original signal
in the sense that enough heavy hitters from the original signal become
heavy hitters in the $(k',N')$ signals and that we can successfully
track enough of these back to the original signal.

In the general situation, $\ell$ may be greater than 2.  We will have
$\ell-1$ intermediate signals in the levels of the filtration, which we
define below.  The runtime will arise from performing $\ell$
repetitions of $\ell$
cascaded exhaustive searches over spaces of size about
$k(N/k)^{1/\ell}$.  There is strong overlap between the set of heavy
hitters in the original signal and the set of heavy hitters in the
shortest signal (of length $k(N/k)^{1/\ell}$).  Assuming a
correspondence of heavy hitters, our task is to trace each such heavy
hitter in the shortest signal through
longer and longer signals, back to the original $(N,k)$ signal.
Unfortunately, each time we ascend a level, we encounter more noise,
and risk losing the trail of our heavy hitter.  In the case of general
$\ell$, we will need to control noise and other losses by setting
parameters as a function of $\ell$.  Roughly speaking, we need to lose
no more than
about $k/\ell$ heavy hitters at each level {\em i.e.},
$|\supp{\widehat{\signaly}}|\le k/\ell$, rather than losing, say,
$k/2$, and (for general $\epsilon$) we need to increase the noise by
at most $O(\epsilon/\ell)$ rather
than $O(\epsilon)$, {\em i.e.}, $\nerr{\widehat\signalz}_1-\nerr{\signalz}_1\le
O(\epsilon/\ell)$.    This is done by setting the parameter $\zeta$ to $1/\ell$
and $\eta$ to $O(\epsilon/\ell)$ instead
of $O(\epsilon)$.  Also, the number of repetitions must increase
from $O(\ell)$ to $O(\ell/\epsilon)$.

We now proceed formally, for general number $\ell$ of levels.

\begin{definition}
Fix integer parameters $s,N$, and $\ell$, and real $\xi>0$.  Given a signal $\signal$
and a hash function $h:[N]\to[O((s/\xi)(N/s)^{1/\ell})]$,
an $\ell$-level {\em signal filtration} on $\signal$ is a collection
of $\ell$ signals,
$\signal^{(1)},\signal^{(2)},\ldots, \signal^{(\ell)}$, defined as
follows.  The signal $\signal^{(q)}$ has length
$N^{(q)}=O\left((s/\xi)(N/s)^{q/\ell}\right)$.  Use the hash function
$h:[N]\to[N^{(1)}]$ and
define $\signal^{(1)}_j$ by $\signal^{(1)}_j=\sum_{h(i)=j}\signal_i$.
Then, for $1\le q<\ell$, define $\signal^{(q+1)}$ from $\signal^{(q)}$ by
splitting each
subbucket $b$ indexing an element of $\signal^{(q)}$ ({\em i.e.,} a
subset of $[N]$) into subsubbuckets, in some arbitrary,
deterministic way.  Denote by $\mysplit{b}$ the resulting set of
subsubbuckets.  Then $\signal^{(q+1)}=\bigcup_b\mysplit{b}$.  Each
subbucket is split into exactly $(N/s)^{1/\ell}$ subsubbuckets
except
that buckets in $\signal^{(\ell-1)}$, which have size only
$\xi (N/s)^{1/\ell}$, are split into $\xi(N/s)^{1/\ell}$
singletons, resulting in
$\signal$.  See Figure~\ref{fig:filtration}.
\end{definition}

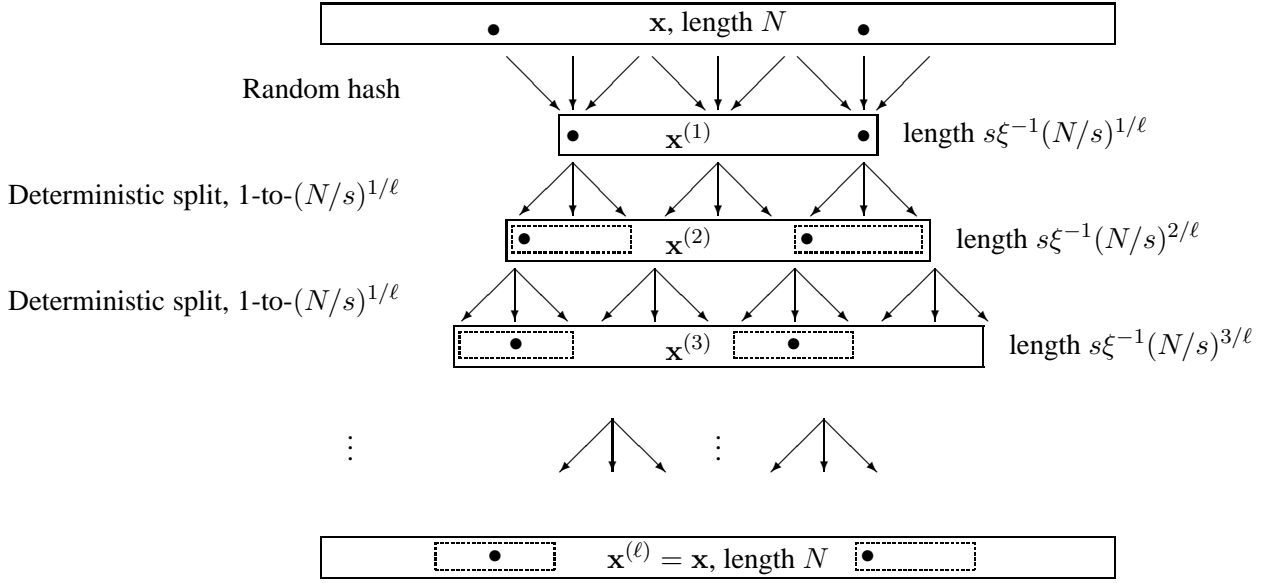
\begin{figure}[t]
\caption{A signal filtration.  Heavy hitters, denoted by bullets, are likely
  isolated in low-noise buckets by the hashing, in which case they
  dominate their buckets at all levels of the deterministic
  splitting.  Algorithm~\ref{algo:weak} (Weak) is used to search all of
  $\signal^{(1)}$.  For $q>1$, given a set $H$ of heavy hitters in
  $\signal^{(q)}$, Algorithm~\ref{algo:weak} (Weak) is also used to find
  heavy hitters in $\signal^{(q+1)}$, but we search only
  $(N/s)^{1/\ell}$ items of $\signal^{(q+1)}$ in $I=\bigcup_{b\in H}\mysplit{b}$
  (indicated by dashed boxes).}
\label{fig:filtration}
\begin{center}
\begin{picture}(300,220)(-180,0)

\put(-150,202){\framebox(300,15)[c]{$\signal$, length $N$}}

\put(-85,207){\makebox(0,0)[c]{$\bullet$}}
\put( 55,207){\makebox(0,0)[c]{$\bullet$}}

\multiput(-80,197)(55,0){3}{
\put(0,0){\vector( 1,-1){20}}
\put(25,0){\vector( 0,-1){20}}
\put(50,0){\vector(-1,-1){20}}
}

\put(-120,185){\makebox(0,0)[r]{Random hash}}

\put( -60,160){\framebox(120,15)[c]{$\signal^{(1)}$\quad\quad}}
\put(  70,162){\makebox(0,0)[bl]{length $s\xi^{-1}(N/s)^{1/\ell}$}}

\put(-55,167){\makebox(0,0)[c]{$\bullet$}}
\put( 55,167){\makebox(0,0)[c]{$\bullet$}}

\multiput(-55,157)(55,0){3}{
\put(0,0){\vector(-1,-1){20}}
\put(0,0){\vector( 0,-1){20}}
\put(0,0){\vector( 1,-1){20}}
}

\put(-120,145){\makebox(0,0)[r]{Deterministic split, 1-to-$(N/s)^{1/\ell}$}}

\put( -80,120){\framebox(160,15)[c]{$\signal^{(2)}$\quad\quad}}
\put( +90,122){\makebox(0,0)[bl]{length $s\xi^{-1}(N/s)^{2/\ell}$}}


\put(-78,123){\dashbox(45,10)[l]{\,$\bullet$}}
\put( 29,123){\dashbox(48,10)[l]{\,$\bullet$}}

\multiput(-77,117)(53,0){4}{
\put(0,0){\vector(-1,-1){20}}
\put(0,0){\vector( 0,-1){20}}
\put(0,0){\vector( 1,-1){20}}
}

\put(-120,105){\makebox(0,0)[r]{Deterministic split, 1-to-$(N/s)^{1/\ell}$}}

\put(-100, 80){\framebox(200,15)[c]{$\signal^{(3)}$\quad\quad}}
\put( 110, 82){\makebox(0,0)[bl]{length $s\xi^{-1}(N/s)^{3/\ell}$}}


\put(-98,83){\dashbox(43,10)[c]{$\bullet$}}
\put(  6,83){\dashbox(45,10)[c]{$\bullet$}}

\put(   0, 52){\makebox(0,0)[c]{$\vdots$}}
\put(-150, 52){\makebox(0,0)[c]{\quad\quad$\vdots$}}

\multiput(-40,60)(80,0){2}{
\put(0,0){\vector(-1,-1){20}}
\put(0,0){\vector( 0,-1){20}}
\put(0,0){\vector( 1,-1){20}}
}

\put(-150,  0){\framebox(300,15)[c]{$\signal^{(\ell)}=\signal$, length $N$}}


\put(-107,3){\dashbox(45,10)[c]{$\bullet$}}
\put(  52,3){\dashbox(45,10)[l]{\,$\bullet$}}

\end{picture}
\end{center}
\end{figure}

Consider a heavy index $i$ in the original signal.  It maps to a
bucket, $h(i)$.  In the favorable case, $i$ dominates $h(i)$, in the
sense that $|\signal_i|$ accounts for, say, $3/4$ of the $\ell_1$ norm
of $h(i)$.  Because the rest of the filtration involves only splitting buckets, it follows that
$i$ will dominate its bucket at each level of the filtration.  For
sufficiently many such $i$'s, we therefore find the bucket the
containing $i$ in level $q+1$ using a Weak algorithm, inductively
assuming we had the correct bucket at level $q$.  We first show that
enough heavy $i$'s dominate their buckets.

\begin{lemma}[Filtration Hashing]
\label{lem:filt}
Fix parameters $N,s,\ell,\alpha$ and let $\xi=\Theta(\alpha)$.  Let
\[h_j:[N]\to[(s/\xi)(N/s)^{1/\ell}]\]
be $O(\ell/\alpha)$ independent hash
functions.    
With adjustable probability $\Omega(1)$ over $\mtx{\Phi}$, the following
holds.  Given signal $\signal$, suppose
$\signal=\mathbf{y}+\mathbf{z}$,
with $|\supp{\mathbf{y}}|\le s$ and
$\nerr{\mathbf{z}}_1=1$, and suppose, without loss of generality, that
$|\signal_i|\ge \Omega(\alpha/s)$ for $i\in\supp{\mathbf{y}}$.  We have
$\signal=\widehat{\mathbf{y}}+\widehat{\mathbf{w}}+\widehat{\mathbf{z}}$,
where, for all $i\in\supp{\widehat{\mathbf{y}}}$, $i$ dominates some
$h_j(i)$ and $|x_i|\ge \alpha/s$,
$|\supp{\widehat{\mathbf{w}}}|\le s/6$, and $\nerr{\mathbf{z}}_1\le
1+O(\alpha)$.
\end{lemma}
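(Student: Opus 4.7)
The plan is to first reduce to the case where every $i \in \supp{\mathbf{y}}$ satisfies $|\signal_i| \ge \alpha/s$: any smaller entry of $\mathbf{y}$ is absorbed into a modified tail $\widehat{\mathbf{z}}$, at a cost of at most $s \cdot (\alpha/s) = \alpha$ in $\ell_1$ norm, giving $\nerr{\widehat{\mathbf{z}}}_1 \le 1 + \alpha$ immediately. It then suffices to show that in $r = O(\ell/\alpha)$ hash repetitions, all but at most $s/6$ of the remaining (large) heavy hitters dominate some $h_j$-bucket; those that do go into $\widehat{\mathbf{y}}$ and the rest into $\widehat{\mathbf{w}}$.

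For a single hash $h_j$ and a fixed $i \in \supp{\mathbf{y}}$ with $|\signal_i| \ge \alpha/s$, call $i$ dominant in $h_j$ when (a) no other element of $\supp{\mathbf{y}}$ hashes to $h_j(i)$, and (b) the tail $\widehat{\mathbf{z}}$-mass in $h_j(i)$ is at most $|\signal_i|/4$. Since $B = (s/\xi)(N/s)^{1/\ell}$, a union bound over the other heavy hitters gives $\Pr[\neg(a)] \le s/B = \xi (s/N)^{1/\ell}$; the expected tail mass in $i$'s bucket is $\nerr{\widehat{\mathbf{z}}}_1 / B \le 2/B$, so Markov gives $\Pr[\neg(b)] \le (8/|\signal_i|)(1/B) \le (8\xi/\alpha)(s/N)^{1/\ell}$. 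Picking $\xi = \Theta(\alpha)$ with a small enough hidden constant makes both bounds comfortably less than $1/4$ in the natural parameter regime $\ell = O(\log(N/s))$, so $\Pr[i \text{ dominant in } h_j] \ge 1/2$.

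To combine repetitions while handling the (minor) dependence between distinct hitters' failure events, I would mimic the coupling argument in Item~\ref{en:collision} of the proof of Lemma~\ref{lem:weak}: hash the heavy hitters in an arbitrary order and observe that, conditioned on all earlier hashings, the per-hitter failure probability remains at most $1/2$ (the only effect of prior hashings is the placement of at most $s-1$ other heavy hitters, already folded into the bound on $\Pr[\neg(a)]$, while the tail expectation in (b) is unchanged). Coupling these conditional Bernoullis to independent $\mathrm{Bernoulli}(1/2)$ trials and applying Lemma~\ref{lem:chernoff} gives that the number of heavy hitters failing in every one of the $r$ repetitions exceeds $s/6$ with probability at most $(O(2^{-r}))^{s/6}$; taking $r = \Omega(\ell/\alpha)$ (using the identity $\log(N/s)/\log(B/s) = O(\ell)$ coming from $B/s = (1/\xi)(N/s)^{1/\ell}$, exactly as in the Weak lemma's choice of $t$) makes this smaller than $\binom{N}{s}^{-c}$ for any chosen constant $c$, enough to union-bound over the $\binom{N}{s}$ possible supports $\supp{\mathbf{y}}$ and yield the adjustable $\Omega(1)$ success probability. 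The main obstacle is precisely the inter-hitter dependence, dispatched by the above coupling; a secondary delicacy is the tail-mass bound in (b), for which a fully robust treatment uses the dyadic / flat-tail decomposition of Item~\ref{en:tail} of the Weak-lemma proof, applied to $\widehat{\mathbf{z}}$ (whose magnitudes are all at most $\alpha/s$), rather than a naked Markov inequality.
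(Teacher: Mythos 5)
Your plan is to re-derive the statement from scratch, whereas the paper disposes of it in three sentences by a direct invocation of Lemma~\ref{lem:weak}, Item~\ref{en:est}, with $\zeta$ a constant, $B=s(N/s)^{1/\ell}$, and $\eta=\Theta(\xi)=\Theta(\alpha)$. This is exactly why Lemma~\ref{lem:weak} was stated and proved for general $B$ rather than just $B=2s$ (the paper flags this right before the lemma). Under that substitution the repetition count $t=O(\eta^{-1}\zeta^{-2}\log(N/s)/\log(B/s))$ collapses to $O(\ell/\alpha)$, the $\zeta s$ bound on ruined hitters becomes $s/6$, the noise growth becomes $O(\alpha)$, and a good median estimate for a coordinate of magnitude $\Omega(\alpha/s)$ certifies that it dominates at least one of its buckets. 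Your coupling step and your appeal to the dyadic/flat-tail decomposition are therefore not new work so much as an unnecessary replay of the machinery already proved in Lemma~\ref{lem:weak}; the cleaner move is to cite it.

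That said, your self-contained version has a genuine arithmetic gap in the union-bound step. You correctly compute the per-repetition, per-hitter failure probability $p=O\bigl((s/N)^{1/\ell}\bigr)$, but then deliberately coarsen it to $1/2$ and apply Chernoff to the events ``fails in all $r$ repetitions,'' obtaining $\bigl(O(2^{-r})\bigr)^{s/6}$ with $r=O(\ell/\alpha)$. For that to beat $\binom{N}{s}^{-c}\approx(s/N)^{cs}$ you would need $2^{-r}\lesssim(s/N)^{\Omega(1)}$, i.e.\ $r=\Omega(\log(N/s))$; but $\ell/\alpha$ can be a small constant while $N/s$ is arbitrarily large, so the inequality fails. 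The identity $\log(N/s)/\log(B/s)=O(\ell)$ that you cite does not rescue the $1/2$ bound. The fix is to keep the bound you already derived: with $p=O\bigl((s/N)^{1/\ell}\bigr)$ one has $E[Z_i]\le p^r=O\bigl((s/N)^{r/\ell}\bigr)$, and Chernoff then gives $\bigl(O(p^r)\bigr)^{s/6}=(s/N)^{\Omega(r s/\ell)}$, which is $(s/N)^{\Omega(s)}$ as soon as $r=\Omega(\ell)$ --- comfortably inside $r=O(\ell/\alpha)$. This is exactly how the corresponding computation is carried out inside the proof of Lemma~\ref{lem:weak} (Chernoff is applied with $\mu/a=O(s/B)$, not $O(1)$), and it is another reason the reduction to that lemma is the tidier route.
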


\begin{proof}
This follows directly from Lemma~\ref{lem:weak}, Item~\ref{en:est},
letting $\zeta$ be a constant, the $B$ of Lemma~\ref{lem:weak} equal
$s(N/s)^{1/\ell}$, and $\eta$ of Lemma~\ref{lem:weak} equal
$\Theta(\xi)$ (which is also $\Theta(\alpha)$).
Then $\mathbf{x}'$ of Lemma~\ref{lem:weak}, Item~\ref{en:est} gives
$\widehat{\mathbf{y}}$ of this lemma (these are the surviving heavy
hitters);
$\widehat{\mathbf{y}}$ of Lemma~\ref{lem:weak} gives
$\widehat{\mathbf{w}}$ of this lemma (these are the ruined heavy 
hitters), and the $\widehat{\mathbf{z}}$'s in the Lemmas coincide.
\end{proof}

Our Sublinear Time Weak system is given in
Algorithm~\ref{algo:sublinear}.

\begin{algorithm}
\caption{A Fast Weak system.}
\label{algo:sublinear}
\begin{algorithmic}
\STATE{ Input: $N$, sparsity $s$, noise $\alpha$,
  $\mtx{\Phi}$, $\mtx{\Phi}\signal$}
\STATE{ Global integer $\ell\ge 2$ // optimize for application}
\STATE{ Output: $\widehat\signal$}
\FOR{ $j\leftarrow 1$ to $t=O(\ell/\alpha)$}
  \STATE{ Pick hash function $h:[N]\to[N^{(1)}]$, using parameters
    $N,s,\ell$ input and $\xi\leftarrow\Theta(\alpha)$.}
  \STATE{ Implement by a hash table augmented with backpointers and threads for and enumerating preimages}
  \STATE{ Let $\signal^{()}$ be the filtration of $\signal$ by $h$}
  \STATE{ $I_{1,j}\leftarrow[N^{(1)}]$}
  \STATE{ // track back through levels of the filtration}
  \FOR{ $q\leftarrow 1$ to $\ell-1$}
    \STATE{ Call Algorithm~\ref{algo:weak} (Weak) on $I_{q,j}$,
      $\signal^{(q)}$, $\zeta\leftarrow 1/\ell$, noise
      $\eta\leftarrow\alpha/\ell$, sparsity $s$, and $B=2s$, getting
      $\widehat\signal$}
    \IF{ $q<\ell$}
      \STATE{
        $I_{q+1,j}\leftarrow\bigcup_{b\in\supp{\widehat\signal}}\mysplit{b}$}
    \ENDIF
    \ENDFOR
  \STATE{ $I\leftarrow \bigcup_j I_{\ell,j}$}
\ENDFOR
\STATE{ Call Algorithm~\ref{algo:weak} (Weak) on $\signal$, $I$,
  $\zeta\leftarrow 1/6$, noise $\eta\leftarrow\Omega(\alpha)$,
  sparsity $s$, $B=2s$, getting
  $\widehat\signal$}
\RETURN{ $\widehat\signal$}
\end{algorithmic}
\end{algorithm}

\begin{lemma}
\label{lem:fastweak}
With proper instantiations of constants, and with fixed values
$\zeta=1/2$, $B=2s$, and $I=[N]$,
Algorithm~\ref{algo:sublinear} is a correct Weak system
(Definition~\ref{def:weak}).  The number
of measurements is
$O(\ell^{8}\alpha^{-3} s\log(N/s))$ and the runtime is
$O(\ell^{5}\alpha^{-3}s(N/s)^{1/\ell})$, assuming a data structure that
uses preprocessing and space $O(\ell N/\alpha)$.
\end{lemma}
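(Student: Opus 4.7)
The plan is to combine Lemma~\ref{lem:filt} with $\ell-1$ nested applications of Lemma~\ref{lem:weak} inside each of $t=O(\ell/\alpha)$ outer repetitions, followed by one final application of Lemma~\ref{lem:weak} on $\signal$ with the narrowed candidate set $I$. I would fix any signal $\signal=\signaly+\signalz$ with $|\supp{\signaly}|\le s$ and $\nerr{\signalz}_1=O(1)$ and condition on every random hash succeeding in the sense promised by Lemma~\ref{lem:filt} and Lemma~\ref{lem:weak}; each such failure event has probability $\binom{N}{s}^{-\Omega(1)}$ (or smaller), so the union bound over $O(\ell/\alpha)$ repetitions and the final call is negligible.

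Consider a single outer repetition $j$. Lemma~\ref{lem:filt} applied with the given $\alpha$ and $\xi=\Theta(\alpha)$ produces a set $G_0^{(j)}\subseteq\supp{\signaly}$ of size at least $(1-1/6)s$ of heavy hitters that dominate their level-$1$ buckets. Because the deterministic operator $\mysplit{\cdot}$ only subdivides existing buckets, each $i\in G_0^{(j)}$ continues to dominate its bucket at every later level $q$ and is therefore a heavy hitter of every $\signal^{(q)}$. At level $q$, the inner Weak call with $\zeta=O(1/\ell)$ and $\eta=O(\alpha/\ell)$ guarantees, by Lemma~\ref{lem:weak}, that the output $\widehat\signal$ misses at most $\zeta s$ heavy hitters of $\signal^{(q)}$; only that many elements of the currently-tracked subset are lost when $I_{q+1,j}$ is formed by taking $\mysplit{b}$ over $b\in\supp{\widehat\signal}$. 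Summing the losses across the $\ell-1$ inner levels shows that a constant fraction of $G_0^{(j)}$ survives to $I_{\ell,j}$ in each repetition, provided the absolute constant hidden in $\zeta=O(1/\ell)$ is chosen small enough.

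Amplification comes from the outer loop. Since each repetition uses fresh randomness both in the outer hash $h$ and in the internal matrices of each Weak call, a fixed $i\in\supp{\signaly}$ is tracked into $I_{\ell,j}$ with some probability $p=\Omega(1)$ uniform in $j$. Then $\mathbb{E}\big|\supp{\signaly}\setminus I\big|\le s(1-p)^t$, and with $t=O(\ell/\alpha)$ a Chernoff tail bound (plus the $\binom{N}{s}^{-\Omega(1)}$ safety margin from Lemma~\ref{lem:weak}) drives this below $s/3$ with the probability required by Definition~\ref{def:weak}. The final Weak call on $\signal$ with $\zeta=1/6$, $\eta=\Omega(\alpha)$, and candidate set $I$ then drops at most another $s/6$ heavy hitters and contributes the noise $\nerr{\widehat\signalz}_1\le\nerr{\signalz}_1+O(\alpha)$, yielding the Weak system guarantee with the stated $\zeta=1/2$. (Note that the internal noise of the intermediate Weak calls never propagates: every call reads $\signal^{(q)}$ from a suitable linear combination of the measurements, and the final estimate comes only from the last call on $\signal$.)

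The resource bounds follow by arithmetic on Lemma~\ref{lem:weak}. Each inner call uses $O(\eta^{-2}\zeta^{-4}s\log(N/s))=O(\ell^{6}\alpha^{-2}s\log(N/s))$ measurements, and, because $|I_{q,j}|=O(s(N/s)^{1/\ell}/\alpha)$ at every level, runs in time $O(\ell^{3}\alpha^{-2}s(N/s)^{1/\ell}\log(N/s))$; multiplying by the $\ell$ levels, the $t=O(\ell/\alpha)$ repetitions, and absorbing $\log(N/s)$ into $\ell^{O(1)}(N/s)^{1/\ell}$ gives the claimed $O(\ell^{8}\alpha^{-3}s\log(N/s))$ measurements and $O(\ell^{5}\alpha^{-3}s(N/s)^{1/\ell})$ runtime. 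The $O(\ell N/\alpha)$ space and preprocessing pays for the hash tables with backpointers and threads so that $\mysplit{b}$ can enumerate its preimages in output-proportional time. The hardest step will be the per-repetition tracking analysis: each of the $\ell-1$ nested Weak calls can independently drop $\zeta s$ heavy hitters, so $\zeta$, $\eta$, and $t$ must be jointly tuned so that (a) a constant fraction of $G_0^{(j)}$ survives one repetition, (b) the outer amplification attains the $s/2$-coverage required by Definition~\ref{def:weak}, and (c) the overall overhead stays at $\ell^{O(1)}\alpha^{-O(1)}$.
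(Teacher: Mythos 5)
Your approach is recognizably different from the paper's. The paper proves the lemma by maintaining a single loop invariant (Invariant~\ref{inv:sublinear}) across levels $q$: a fixed decomposition $\signal = \mathbf{y} + \mathbf{w} + \mathbf{z}$ with $\supp{\mathbf{y}}\subseteq\bigcup_j I_{q,j}$, where $|\supp{\mathbf{w}}|$ is allowed to grow by $O(s/\ell)$ per level and is initialized at $s/6$ by Lemma~\ref{lem:filt}; each maintenance step is a direct application of Lemma~\ref{lem:weak}. You instead argue per-repetition survival of a constant fraction of $G_0^{(j)}$, and then use the $t=O(\ell/\alpha)$ outer repetitions as an \emph{amplification} device, bounding $\mathbb{E}\big|\supp{\signaly}\setminus I\big|\le s(1-p)^t$ and applying a concentration bound.

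The amplification step is the part that does not follow from the lemmas as stated, and that is a genuine gap. Lemma~\ref{lem:weak} gives a ``forall''-style guarantee: with probability $1-\binom{N}{s}^{-\Omega(1)}$ over $\mtx{\Phi}$, for \emph{every} admissible signal the output misses at most $\zeta s$ heavy hitters. It does \emph{not} say that a \emph{fixed} heavy hitter $i$ is recovered with probability $\Omega(1)$ over the matrix randomness; the adversarial bound permits the same $\zeta s$ heavy hitters (e.g., the smallest ones, or those displaced by decoys) to be missed in every repetition, in which case $(1-p)^t$ provides no gain. To make your argument work you would need to re-open the proof of Lemma~\ref{lem:weak} and extract a per-item success probability (the collision analysis does give a per-item bound, but the ``displacement by a decoy with an inflated estimate'' part of Item~\ref{en:conc} is inherently a global, rank-dependent event and is not obviously per-item). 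Relatedly, the Chernoff tail you invoke on $|\supp{\signaly}\setminus I|=\sum_i \mathbb{1}[i\notin I]$ requires independence of the indicators across $i$, which fails badly here since all $i$'s share the same per-repetition randomness; at best you could use Markov, which is probably fine given how small $s(1-p)^t$ would be, but the logical prerequisite (a valid $p=\Omega(1)$) is still missing. The paper sidesteps all of this: the $t$ outer repetitions are used only so that Lemma~\ref{lem:filt} can guarantee that all but $s/6$ heavy hitters dominate their bucket in \emph{some} $h_j$; the per-level Weak calls are then analyzed purely via the worst-case $\zeta s$ miss bound applied to the union, with $\zeta$ scaled down to $O(1/\ell)$ so the cumulative loss over $\ell$ levels stays at $O(s)$. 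You should replace your amplification paragraph with the paper's invariant-maintenance argument, or else prove a genuinely per-item version of Lemma~\ref{lem:weak}.

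Two of your observations are sound and worth keeping: that the per-repetition loss over $\ell-1$ levels forces $\zeta$ to be $O(1/\ell)$ with a sufficiently small constant (the paper does exactly this), and that noise from intermediate Weak calls never contaminates the final output because the last call re-reads $\signal$ from $\mtx{\Phi}\signal$ and the intermediate calls only narrow the candidate set $I$. Your arithmetic for measurements ($\ell^8\alpha^{-3}$ from $\eta^{-2}\zeta^{-4}$ and the two loops) and runtime (absorbing $\log(N/s)$ into $(N/s)^{1/\ell}$ by shifting $\ell$) matches the paper's accounting.
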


\begin{proof}
\iffull
\else
See the full version of this paper for a complete proof.
\fi
We maintain the following invariant for all $q$:
\begin{invariant}
\label{inv:sublinear}
We have $\signal=\mathbf{y} + \mathbf{w} + \mathbf{z}$, where
$\supp{\mathbf{y}}\subseteq \bigcup_j I_{q,j}$,
$|\supp{\mathbf{w}}|\le (s/6)(1+(q-1)/\ell)$, and
$\nerr{\mathbf{z}}_1\le 1+\alpha(q-1)/\ell$.  Elements of $\mathbf{y}$
dominate their buckets.  The size of $\bigcup_j I_{q,j}$ is
$s(N/s)^{1/\ell}$.
\end{invariant}
\iffull
\else
\vspace*{-1.5\baselineskip}
\fi
\iffull
The invariant holds at initialization by Lemma~\ref{lem:filt}
(Filtration).  This is
because the
elements in $\mathbf{y}$ can be assumed to be of
magnitude at least $\alpha/s$ and to dominate their buckets, while the
filtration process preserves the noise $\ell_1$ norm.  The invariant
is maintained as $q$ increases by Lemma~\ref{lem:weak} (Weak).  The
failure probability can be taken small enough so that we
can take a union bound over all $\ell\le\log(N)$ levels times the
number of choices in each level (addressed in the proof of
Lemma~\ref{lem:weak} (Weak)).

At $q=\ell$, we have $\supp{\mathbf{y}}\subseteq I$.
Each $I_{\ell,j}$ has size
$O(s)$, since it is the unsplit support of the output of
Algorithm~\ref{algo:weak}, so $|I|=O(s\ell/\alpha)$.
Also, $|\supp{\mathbf{w}}|\le s/3$ and
$\nerr{\mathbf{z}}_1\le 1+O(\alpha)$.  The
final call to Algorithm~\ref{algo:weak} (Weak)
recovers all but another $s/6$ of the support of $\mathbf{y}$, which, when combined
with $\mathbf{w}$, gives at most $s/2$ missed heavy hitters---the vector
$\widehat{\mathbf{y}}$ in the definition of a Weak system.  It also contributes
an acceptable amount $O(\alpha)$ of additional noise that, with $\mathbf{z}$,
constitute $\widehat{\mathbf{z}}$ in the definition of a Weak system.

\tightpgh{Costs.}
The number of measurements and runtime is correct by construction,
assuming the hash and split operations can be done in constant time.
This is straightforward using a hash table with appropriate pointers
for the split operation.  Such a data structure needs space $O(N)$ and
preprocessing $O(N)$ for each of the $O(\ell/\alpha)$ repetitions, for a total
of $O(\ell N/\alpha)$.  Note that the total cost, over all $\ell$ levels, is
only $O(\ell N/\alpha)$ and not $O(\ell^2 N/\alpha)$, since the contributions from the
levels form a geometric series.

In more detail, we first consider dependence on $\alpha$ and, below,
on $\ell$.  The number of measurements is proportional to
$\alpha^{-3}$, since the number of repetitions is proportional to
$\alpha^{-1}$ and the error parameter $\eta$ is proportional to
$\alpha$, so each call to Algorithm~\ref{algo:weak} requires
$\alpha^{-2}$ measurements.
The bottom ($q=1$) level takes
runtime cubic in $\alpha$, since there are $O(\ell/\alpha)$
repetitions of $I_{1,j}$ of size $O((s/\alpha)(N/s)^{1/\ell})$ and the
error parameter $\eta$ is proportional to $\alpha$.  Other levels take
runtime just $\alpha^{-2}$, since $|I_{>1,j}|$ has size
$O(s(N/s)^{1/\ell})$.

The number of measurements depends on the {\em eighth} power of
$\ell$:  one
factor for the number of repetitions in the outer loop, one factor for
the number of
levels in the inner loop, $\ell^2$ for the tighter approximation parameter
$\eta=\alpha/\ell$ and $\ell^4$ for the tighter omission
parameter $\zeta=1/\ell$, that
contribute the factor
$\eta^{-2}\zeta^{-4}$ to the costs.
The runtime of each call to Algorithm~\ref{algo:weak} is proprotional
to only the first power of $\eta\zeta^2$
times $|I|\log(N/s)$.  The
bottom level of the filtration involves a search over $I$ of size
$(s/\alpha)(N/s)^{1/\ell}$ for $\alpha\approx\epsilon/\ell$, while the
other $\ell$ levels of the filtration search over
$O(s(N/s)^{1/\ell})$.  Thus the runtime is
$O(\ell^5\alpha^{-3}s(N/s)^{1/\ell}\log(N/s))$.

Finally,
note that, $(N/s)^{1/\ell}\log(N/s)\le(N/s)^{1/(\ell-1)}$.  By
putting $\ell_0=\ell-1$, we get
\[\ell^5(N/s)^{1/\ell}\log(N/s)\le (\ell_0+1)^5(N/s)^{1/\ell_0},\]
which is
$O(\ell_0^5(N/s)^{1/\ell_0})$, so we lose the $\log(N/s)$ factor for
sufficiently large $N/s$.
\fi
\end{proof}

Some remarks follow.
Note that both the filtration and the measurement process of
Algorithm~\ref{algo:weak} involve hashing.  While the hashing of
Algorithm~\ref{algo:weak} into $B=\eta^{-1}\zeta^{-2}s$ buckets results
in $B$ measurements in each of $\eta^{-1}\zeta^{-2}\log(N/s)$ repetitions, the
hashing to create a filtration does not {\em directly} result in
measurements or any recovery-time object.  We never make
$(N/s)^{1/\ell}$ measurements---that would be too many---and we do not
instantiate the upper levels of the filtration at decode
time---instantiating a signal of length $(N/s)^{1-1/\ell}$ would take
too long.

\subsection{Toplevel System}
\label{sec:toplevel}

Finally, we give a Toplevel system.  The
construction here closely follows~\cite{GLPS10} (where it was presented for
the foreach, $\ell_2$-to-$\ell_2$ problem).
A Toplevel system is an algorithm that solves our overall problem.

\begin{definition}
\label{def:toplevel}
An
{\em approximate sparse recovery system}
(briefly, a {\bf Toplevel} system),
consists of parameters $N,k,\epsilon$, an $m$-by-$N$
{\em measurement matrix}, $\mtx{\Phi}$, and a
{\em decoding algorithm}.  Fix a vector, $\signal$, where $\signal_k$
denotes
the optimal $k$-term approximation to $\signal$.  Given the parameters
and $\mtx{\Phi}\signal$, the system
approximates $\signal$ by $\widehat \signal=\mathcal{D}(\mtx{\Phi}
\signal)$, which must satisfy
$\nerr{\widehat \signal - \signal}_1
\le (1+\epsilon)\nerr{\signal_k - \signal}_1$.
\end{definition}

\begin{thm}[Toplevel]
\label{thm:fasttoplevel}
Fix parameters $N,k,\ell$.
Algorithm~\ref{algo:fasttoplevel} (Toplevel)
returns $\widehat\signal$ satisfying
\[\nerr{\widehat\signal-\signal}_1\le(1+\epsilon)\nerr{\signal_k-\signal}_1.\]
It uses $O(\ell^{8}\epsilon^{-3}k\log(N/k))$
measurements and runs in time
$O(\ell^{5}\epsilon^{-3}k(N/k)^{1/\ell})$, using a data structure
requiring $O(\ell Nk^{0.2}/\epsilon)$ preprocessing time and storage space.
\end{thm}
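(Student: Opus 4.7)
I propose an iterative peeling strategy: apply the Fast Weak system of Lemma~\ref{lem:fastweak} repeatedly, with the $j$-th call responsible for reducing the remaining heavy-hitter count from $s_j = k/2^j$ to $s_{j+1} = k/2^{j+1}$. The Toplevel measurement matrix $\mtx{\Phi}$ is obtained by vertically stacking the matrices $\mtx{\Phi}_j$ used by each round; since Lemma~\ref{lem:fastweak} succeeds in the forall sense with probability $1 - \binom{N}{s_j}^{-\Omega(1)}$, a union bound over the $O(\log k)$ rounds yields a single $\mtx{\Phi}$ that works simultaneously across all rounds and all input signals.

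\textbf{Loop invariant.} Normalize so that $\nerr{\signal - \signal_k}_1 = 1$. After round $j$, the accumulated estimate $\widehat{\signal}^{(j)}$ satisfies
\[
\signal - \widehat{\signal}^{(j)} = \mathbf{y}^{(j)} + \mathbf{z}^{(j)},
\quad |\supp{\mathbf{y}^{(j)}}| \le s_j,
\quad \nerr{\mathbf{z}^{(j)}}_1 \le 1 + \sum_{i<j} \eta_i.
\]
At round $j$ I form the residual measurements $\mtx{\Phi}_j \signal - \mtx{\Phi}_j \widehat{\signal}^{(j)}$ by linearity, invoke Fast Weak on this residual with sparsity $s_j$, omission $\zeta = 1/2$, and noise parameter $\eta_j$ (to be chosen below), and add its output to $\widehat{\signal}^{(j)}$. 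Definition~\ref{def:weak} combined with Lemma~\ref{lem:fastweak} then delivers a new decomposition preserving the invariant with $s_{j+1} = s_j/2$ and accumulated noise increased by $\eta_j$. After $\lceil \log_2 k \rceil$ rounds $s_j < 1$, so $\mathbf{y}^{(j)} = 0$ and $\signal - \widehat{\signal} = \mathbf{z}^{(j)}$.

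\textbf{Choosing the $\eta_j$ schedule.} The delicate step is simultaneously bounding the accumulated noise and the total measurement count. I would take $\eta_j = c_1 \epsilon \rho^j$ for a constant $\rho \in (2^{-1/3}, 1)$ and $c_1$ chosen so that $\sum_j \eta_j \le \epsilon$. The noise bound then gives $\nerr{\widehat\signal - \signal}_1 \le (1+\epsilon)\nerr{\signal - \signal_k}_1$ after denormalizing. Lemma~\ref{lem:fastweak} uses $O(\ell^8 \eta_j^{-3} s_j \log(N/s_j))$ measurements per round, so the total is
\[
\sum_j O\!\left(\ell^8 \epsilon^{-3} \rho^{-3j} \cdot \tfrac{k}{2^j} \cdot \log(N/k)\right)
= O\!\left(\ell^8 \epsilon^{-3} k \log(N/k)\right),
\]
which converges precisely because $\rho^3 > 1/2$. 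An analogous geometric-series argument, using that per-round runtime scales as $\eta_j^{-3} s_j (N/s_j)^{1/\ell}$, shows the total runtime is dominated by round $j=0$ at $O(\ell^5 \epsilon^{-3} k (N/k)^{1/\ell})$; preprocessing and storage are inherited unchanged from Lemma~\ref{lem:fastweak}.

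\textbf{Main obstacle.} The key tension is that sharper noise per round (smaller $\eta_j$) buys a smaller noise accumulation but drives up the per-round measurement budget as $\eta_j^{-3}$. The parameter window $\rho \in (2^{-1/3}, 1)$ is narrow but nonempty, and it is essential that both series (accumulated noise and per-round measurement cost) be geometric and simultaneously convergent in the same $\rho$. Once this schedule is fixed, the remaining work --- checking that $\nerr{\mathbf{z}^{(j)}}_1$ stays $\le O(1)$ so Definition~\ref{def:weak} continues to apply, and that the $\zeta=1/2$ omission correctly halves the heavy-hitter count at each round --- is routine bookkeeping.
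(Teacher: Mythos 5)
Your proposal matches the paper's own proof almost exactly: same iterative peeling $k\to k/2\to k/4\to\cdots$, same residual-measurement update, same geometric schedule for the noise parameter (the paper picks the concrete $\rho=9/10$, $\alpha_j=O(\epsilon(9/10)^j)$), and the same geometric-series bookkeeping for both measurement count and runtime.

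Two small points you should tighten. First, your claimed window $\rho\in(2^{-1/3},1)$ is derived only from the measurement series $\sum_j 2^{-j}\rho^{-3j}$; the runtime series $\sum_j 2^{-(1-1/\ell)j}\rho^{-3j}$ imposes the tighter constraint $\rho>2^{-(1-1/\ell)/3}$, which for the worst case $\ell=2$ is $\rho>2^{-1/6}\approx 0.891$. The window is still nonempty (and $9/10$ lies in it), but it is narrower than you state. Second, the preprocessing/storage bound is \emph{not} ``inherited unchanged from Lemma~\ref{lem:fastweak}.'' Each round $j$ needs $O(\ell N/\alpha_j)$ hash-table space, and since $\alpha_j$ shrinks geometrically the cost is dominated by the final round $j=\lg k$, where $\alpha_{\lg k}=\epsilon(9/10)^{\lg k}=\epsilon k^{-\lg(10/9)}\ge \epsilon k^{-0.2}$. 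That is where the extra $k^{0.2}$ factor in $O(\ell N k^{0.2}/\epsilon)$ comes from; a per-round argument at $j=0$ alone would only give $O(\ell N/\epsilon)$.
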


\begin{algorithm}
\caption{Toplevel System}
\label{algo:fasttoplevel}
\begin{algorithmic}
\STATE{Input: $\mtx{\Phi}$, $\mtx{\Phi}\signal$, $N$, $k$, $\epsilon$}
\STATE{Output: $\widehat\signal$}
\STATE{$\widehat\signal\leftarrow0$}
\STATE{$\meas\leftarrow\mtx{\Phi}\signal$}
\FOR { $j=1$ to $\lg k$}
  \STATE {Run Algorithm~\ref{algo:sublinear} (Fast Weak) on $\meas$ with length $N$, sparsity
    $s\leftarrow k/2^j$, approx'n $\alpha\leftarrow O(\epsilon(9/10)^j)$}
  \STATE {Let $\signal'$ be the result}
  \STATE {Let $\widehat\signal=\widehat\signal + \signal'$}
  \STATE {Let $\meas=\meas-\mtx{\Phi}\signal'$}
\ENDFOR
\RETURN $\widehat\signal$
\end{algorithmic}
\end{algorithm}

\iffull
\begin{proof} [sketch]
Intuitively, the first
iteration of
Algorithm~\ref{algo:fasttoplevel} transforms a measured but unknown
$k$-sparse signal
with noise magnitude $1$ to a measured but unknown
$(k/2)$-sparse
signal with noise
$1+O(\epsilon)$.  In subsequent iterations, the sparsity $s$ decreases
(relaxes) from $k$ to $k/2$ to $k/4$ while the noise tolerance
$\alpha$ decreases (tightens) from $\epsilon$ to $(9/10)\epsilon$ to
$(9/10)^2\epsilon$, etc.  
We save a factor 2 in the
number of measurements because $s$ decreases and that more than
pays for an increase in number of measurements by the factor
$(10/9)^2$, that arises because $\eta$ decreases.  Thus measurement cost is bounded by
decreasing geometric series and so is bounded by the first term, which is
the measurement cost
of the first iteration.  Overall error is the sum of a decreasing
geometric series with ratio $9/10$, so the overall error
$\nerr{\widehat{\mathbf{z}}}_1$ remains bounded, by $1+O(\epsilon)\le 2$, with the given
algorithm.
A similar argument (with an additional wrinkle) holds
for runtime.

More formally, note that the returned vector $\widehat\signal$ has
$O(k)$ terms.  There is an invariant that
$\signal=\widehat\signal+\signaly+\signalz$, where
$\meas$ is the measurement vector for $\signaly+\signalz$,
$|\supp{\signaly}|\le k/2^j$ and
\[\nerr{\signalz}_1 \le
1+O(\epsilon)
\left[\frac9{10}+\left(\frac9{10}\right)^2+\left(\frac9{10}\right)^3+\cdots+\left(\frac9{10}\right)^j\right]\]
after $j$ iterations.
This is true at initialization, where
$\signaly=\signal_k$ and
$\nerr{\signalz}_1=\nerr{\signal-\signal_k}_1=1$.
At termination, $\signal=\widehat\signal+\signalz$, with
$\nerr{\signalz}_1\le 1+O(\epsilon)$, since the infinite geometric series
sums to 3.
Maintenance of the loop invariant follows from correctness of the Weak
algorithm.

Using the bound on measurements for Algorithm~\ref{algo:sublinear}, the
number of measurements used by Algorithm~\ref{algo:fasttoplevel}
is proportional to
\begin{eqnarray*}
\sum_j \ell^8\epsilon^{-3}(k/2^j)\log(N2^j/k)(10/9)^{2j}
& \le & \ell^8\epsilon^{-3}k\log(N/k)\sum_j (50/81+o(1))^j\\
& = & O(\ell^8\epsilon^{-3}k\log(N/k)).
\end{eqnarray*}

Similarly, using the runtime bound for Algorithm~\ref{algo:weak} and
writing $s(N/s)^{1/\ell}$ as $s^{1-1/\ell}N^{1/\ell}$, the
runtime of Algorithm~\ref{algo:fasttoplevel} is proportional to
\begin{eqnarray*}
\sum_j \ell^5\epsilon^{-3}(k/2^j)^{1-1/\ell}N^{1/\ell}(10/9)^{3j}
& \le & \ell^5\epsilon^{-3}k(N/k)^{1/\ell}
    \sum_j \left[(10/9)^3 2^{1/\ell - 1}\right]^j\\ 
& \le & \ell^5\epsilon^{-3}k(N/k)^{1/\ell}
        \sum_j \left[(10/9)^3 2^{-1/2}\right]^j,\mbox{\quad since
          $\ell\ge 2$}\\
& \le & \ell^5\epsilon^{-3}k(N/k)^{1/\ell} \sum_j 0.97^j\\
& \le & O(\ell^5\epsilon^{-3}k(N/k)^{1/\ell}).
\end{eqnarray*}

Finally, the storage space for hash tables in
Algorithm~\ref{algo:sublinear} is $N$ for each of
$\ell/\alpha$ repetitions.  This is dominated by the smallest
$\alpha$, which is
$\epsilon(9/10)^{\lg k} = \epsilon k^{-\lg 10/9}
\ge \epsilon k^{-0.2}$,
giving $N\ell k^{0.2}/\epsilon$ space and preprocessing.
For any constant
real-valued $c>0$, this can be improved to $(1/c)^{O(1)}k^c$ by
replacing $9/10$ with $1-c$ and $\epsilon$ with $c\epsilon$.  This
will also increase the runtime and number of measurements by a
constant factor that depends on $c$.
\end{proof}
\fi

\section{Open Problems}
\label{sec:open}

In this section, we present some generalizations of our algorithm that
we leave as open problems.

\tightpgh{Small space.}  Above we presented an algorithm that used
superlinear space to store and to invert a hash
function.
The
amount of space is partially excuseable because it can be amortized
over many instances of the problem, {\em i.e.}, many signals.  It also
has the advantage over a hash function that hash operations can be
performed simply in time $O(1)$.  It should be possible, however, to use a standard
hash function instead of a hash table to avoid the space requirement,
though the runtime will likely increase.  We leave
as an open problem a fuller treatment of these tradeoffs.

\tightpgh{Column Sparsity.}
An advantage in sparse recovery is the sparsity of the measurement
matrix, $\mtx{\Phi}$.  Our matrix can easily seen to have at most
$(\ell/\epsilon)^{O(1)}\log(N/k)\log(k)$ non-zeros per column, {\em i.e.}, there
is no leading
factor of $k$.  But we have not optimized $\mtx{\Phi}$ for column
sparsity and we leave that for future work. 

\tightpgh{Post-measurement noise.} Many algorithms in the literature
give, as input to the decoding algorithm, not $\mtx{\Phi}\signal$, but
$\mtx{\Phi}\signal+\noise$, where $\noise$ is an arbitrary noise vector.
The algorithm's performance must degrade gracefully with $\nerr{\noise}$
(usually the 2-norm of $\noise$).
It can be seen that our algorithm does
tolerate substantial noise, but in $\ell_1$ norm.  We leave to future
work full analysis and possible
improved algorithms.

\tightpgh{Lower overhead in number of measurements.} The approach we
present produces a Toplevel system from a Weak system, using a
filtration.  It
has a blowup factor of $\ell^8$ in the number of measurements
over a weak system,
where $\ell>1$ is an {\em integer}.  Thus the blowup factor in number
of measurements for a time-$\sqrt{kN}\log(N/k)$ algorithm is at least
$256$, even (implausibly) ignoring all overhead and other constant
factors.  This should be improved.

\tightpgh{Simplify.}
In~\cite{NT08}, the authors take a different approach to fast
algorithms.  They argue that a small number of Fourier transforms of
length $N$ in a simple algorithm that takes linear time with a DFT
oracle will be faster in practice than an algorithm that is
asymptotically sublinear.  They give an algorithm, CoSaMP, with
runtime slightly greater than $N$, under a plausible assumption about
random row-submatrices of the Fourier matrix and a bound on the
``dynamic range'' of the problem, {\em i.e.} the ratio of
$\nerr{\signal}_2$ to $\nerr{\signal-\signal_k}_2$.

In the spirit of that paper, it would be good to use our speedup
approach under the same assumptions as their paper, with $\ell=2$.
That is, ideally, we would want to double or triple the number of DFTs in the original
CoSaMP, but reduce the length of the DFTs from $N$ to approximately
$\sqrt{kN}$.  Our algorithm also suffers considerable overhead in
converting a Weak algorithm into a Toplevel algorithm---a significant
flaw if the goal is a simple, low-overhead algorithm---but CoSaMP has a
similar iterative structure and it is conceivable that our
Weak-to-Toplevel overhead can be combined subadditively with CoSaMP's
iterative overhead.

\iffull
\relax
\else
\newpage
\fi

\begin{center}
\textbf{Acknowledgement}
\end{center}
We thank Anna Gilbert, Yi Li, Hung Ngo, Atri Rudra, and Mary Wootters
for discussions and for reading an earlier draft.

\bibliographystyle{alpha}
\bibliography{l1.bib}

\end{document}